%
%
%
%
%
%
\documentclass[12pt]{article}\date{}

\usepackage{amsmath,amssymb,latexsym,xspace}
\usepackage[standard]{ntheorem}
\usepackage{changebar}
\usepackage{lineno}
\ifx\xyloaded\undefined \input xy \fi

\xyrequire{arrow}
\xyrequire{curve}
\xyrequire{frame}

\newdimen\automaunit\automaunit=15pt
\newcount\stateidx   
\def\automaidxbase{0}
\let\statelabel\relax
\let\automamore\relax

\def\itera#1[#2;#3]{\ifx;#2;\else#1#2,,,;\itera#1[#3]\fi}
\def\XpandAftrGroup#1#2{\edef\GrpAftr{#2}\expandafter#1\GrpAftr}


\def\automaS #1,#2,#3,#4;{%
\XpandAftrGroup\POS%
{(#1,#2)*++[o]+=<\automaunit>[o][F-]{\statelabel{#3}}="\the\stateidx"}%
\advance\stateidx by 1%
}

\def\automaT #1,#2,#3,#4,#5;{%
\ifnum#1=#2
\XpandAftrGroup\POS{"#1";p+/\ifx,#4,u\else#4\fi\automaunit/:p}%
\ar@`{(1,-1),(2.5,0),(1,1)}_*{#3}"#1"%
\else
\POS"#1"\XpandAftrGroup%
\ar{\ifx,#4,^\else@/^#4mm/\ifdim#4mm>0mm^\else_\fi\fi*{#3}"#2"}%
\fi%
}

\def\automaI #1,#2,#3;{%
\ifx,#2,\automaI #1,l,;\else%
\POS"#1";p+/#2/**{}?<+/#2\automaunit/\ar"#1"\fi%
}

\def\automaF #1,#2;{%
\POS"#1"*\frm{ee}%
}

\def\automa[#1][#2][#3][#4]{%
\xy 0;<4\automaunit,0cm>:%
\stateidx=\automaidxbase%
\save%
\itera\automaS[#1;;]%
\itera\automaT[#2;;]%
\itera\automaI[#3;;]%
\itera\automaF[#4;;]%
\restore%
\automamore\endxy}


\theorembodyfont{\upshape}
\newtheorem{conjecture}{Conjecture}

\newcommand{\qed}{\null\hfill$\Box$}

\newcommand{\fact}{\operatorname{Fact}}
\newcommand{\card}{\operatorname{Card}}

\newcommand{\A}{\ensuremath{\mathcal A}\xspace }

\newcommand{\cM}{\mathcal{M}}
\newcommand{\cF}{\mathcal{F}}

\title{On incomplete and synchronizing finite sets
\thanks{This work was partly supported by MIUR project PRIN 2010/2011
{``Automi e Linguaggi Formali: Aspetti Matematici e Applicativi''.}}}

\author{Arturo Carpi \\
Dipartimento di Matematica e Informatica, \\
Universit\`a degli Studi di Perugia, \\
via Vanvitelli 1, 06123 Perugia, Italy. \\
e-mail: carpi@dmi.unipg.it
\and Flavio D'Alessandro \\
Dipartimento di Matematica, \\
La Sapienza Universit\`a di Roma \\
Piazzale Aldo Moro 2, 00185 Roma, Italy. \\
e-mail: dalessan@mat.uniroma1.it}

\begin{document}
\maketitle

\begin{abstract}
  This paper situates itself in the theory of variable length codes and of finite automata where the concepts of completeness and synchronization play a central role.
  In this theoretical setting, we investigate the problem of finding upper bounds to the minimal length of synchronizing words and incompletable words of a finite language $X$ in terms of the length of the words of $X$.
  This problem is related to two well- known conjectures formulated by \v Cern\'y and Restivo, respectively.
  In particular, if Restivo's conjecture is true, our main result provides a quadratic bound for the minimal length of a synchronizing pair of any finite synchronizing complete code with respect to the maximal length of its words.
  
  \medskip\noindent {\em Keywords:} \v Cern\'y conjecture, synchronizing automaton, incompletable word, synchronizing set, complete set
\end{abstract}

\section{Introduction}
The concepts of completeness and synchronization play a central role in Formal Language Theory since they appear in the study of several problems on variable length codes and on finite automata \cite{BP-book}.
According to a well-known result of Sch\"u\-tzen\-ber\-ger, the property of completeness provides an algebraic characterization of finite maximal codes, which are the objects used in Information Theory to construct optimal sequential codings.

Let $X$ be a set of words on an alphabet $A$ and let $X^*$ be its Kleene closure.
The set $X$ is {\em complete} if any word on the alphabet $A$ is a factor of some word belonging to $X^*$, otherwise it is {\em incomplete}.
In the latter case, any word which is factor of no word of $X^*$ is said to be {\em incompletable in $X$}.

In \cite{Restivo}, Restivo conjectured that a finite incomplete set $X$ has always an incompletable word whose length is quadratically bounded by the maximal length of the words of $X$.
Results on this problem have been obtained in \cite{boe-deluca-restivo,Fici,gusev,Restivo}.
The property of synchronization plays a natural role in Information Theory where it is relevant for the construction of decoders that are able to efficiently cope with decoding errors caused by noise during the data transmission.
A set $X$ is {\em synchronizing} if there are two words $u, v$ of $X^*$ such that whenever $ruvs \in X^*$, $r,s \in A^*$, one has also $ru, vs \in X^*$.
The pair of words $(u,v)$ is called a {\em synchronizing pair of $X$}.

In the study of synchronizing sets, the search for synchronizing words of minimal length in a prefix complete code is tightly related to that of synchronizing words of minimal length for synchronizing complete deterministic automata and the celebrated \v Cern\'y Conjecture \cite{Cerny} (see also \cite{A,BP,BBP,Carpi,CD1,Acta,DLT,DLT10,CD3,Cerny,Pin78,Pin78bis,V08} for some results on the problem).
In particular, in \cite{BP} (see also \cite{BBP}), B\'eal and Perrin have proved that a complete synchronizing prefix code $X$ on an alphabet of $d$ letters with $n$ code-words has a synchronizing word of length $O(n^2).$

In this paper we are interested in finding upper bounds to the minimal lengths of incompletable and synchronizing words of a finite set $X$ in terms of the size of $X$.

We recall that the {size of $X$} is the parameter $\ell({X})$ defined as the maximal length of the words of $X$.

Let $\cal L$ be a class of finite languages.
For all $n,d>0$, we denote by $R_{\mathcal L} (n,d)$ the least positive integer $r$ satisfying the following condition: any incomplete set $X\in {\cal L}$ on a $d$-letter alphabet such that $\ell(X)\leq n$ has an incompletable word of length $r$.
Similarly, we denote by $C_{\cal L} (n,d)$ the least positive integer $c$ satisfying the following condition: any synchronizing set $X\in {\cal L}$ on a $d$-letter alphabet such that $\ell(X)\leq n$ has a synchronizing pair $(u,v)$ such that $|uv|\leq c$.

In this context, the main result of this paper provides a bridge between the parameters $R_{\cal L} (n,d)$ and $C_{\cal L} (n,d)$.
More precisely, denoting by ${\cal F}$ and by ${\cal M}$ the classes of finite languages and of complete finite codes respectively, we show that, for all $n,d>0$,
\[
  C_{\cal M}(n,d)\leq 2R_{\cal F}(n,d+1)+2n-2.
\]
In particular, if Restivo's conjecture is true, the latter bound gives
\[
  C_{\cal M}(n,d) = O(n^2),
\]
thus providing a quadratic bound in the size of the set for the minimal length of a synchronizing pair of a finite synchronizing complete code.

In the second part of the paper, we study the dependence of the parameters $R_{\cal L} (n,d)$ and $C_{\cal L} (n,d)$ upon the number of letters $d$ of the considered alphabet, by showing that both the parameters have a low rate of growth.
More precisely, we show that, for the class ${\cal L}$ of finite languages (resp., codes, prefix codes), we have
\[
  R_{\cal L}(n,d)\leq \left\lceil \frac {R_{\cal L}(\lceil\log_2 d\rceil n,2)}{\lfloor\log_2 d\rfloor} \right\rceil,
\]
and, for the class $\cal L$ of finite complete languages (resp., codes, prefix codes), we have
\[
  C_{\cal L}(n,d)\leq \left\lceil \frac{C_{\cal L} (\lceil\log_2(d+1)\rceil n,2)}{\lfloor\log_2(d-1)\rfloor} \right\rceil.
\]
A similar result is obtained also when $\cal L$ is the class of finite (not necessarily complete) languages (resp., codes, prefix codes).

All the latter results were presented with a sketch of the proof in \cite{CDICTCS14,CDMONS14}.

The paper is structured as follows.
In Section \ref{due}, some basic results about complete and synchronizing codes as well as synchronizing automata and \v Cern\'y Conjecture are given.
In Section \ref{tre} we describe our main result.
In Section \ref{quattro}, a study of the dependence of the parameters $R_{\cal L}(n,d)$ and $C_{\cal L}(n,d)$ from the number $d$ of letters of the alphabet is presented.
Finally, in Section \ref{cinque}, some open questions about Restivo Conjecture are formulated.

\section{Preliminaries}
\label{due}
In this section we shortly recall some basic results of the theory of automata and of the theory of codes which will be useful in the sequel and we fix the corresponding notation used in the paper.
The reader can refer to \cite{BP-book,cdel} for more details.

\subsection{Complete and synchronizing sets} Let $A$ be a finite alphabet and let $A^{*}$ be the free monoid of words over the alphabet $A$.
The identity of $A^*$ is called the {\em empty word} and is denoted by $\epsilon$.
The {\em length} of a word $w\in A^*$ is the integer $|w|$ inductively defined by $|\epsilon|= 0$, $|wa| = |w| + 1$, $w\in A^*$, $a\in A$.
Given $w\in A^*$ and $a\in A$, we denote by $|w|_a$ the number of occurrences of the letter $a$ in $w$.
For any finite set of words $W$ we denote by $\ell({W})$ the maximal length of the words of $W$.
The number $\ell({W})$ will be called the {\em size} of $W$.
Given words $u, w\in A^*$, $u$ is said to be a {\em factor} of $w$ if $w = \alpha u \beta,$ for some $\alpha, \beta \in A^*$.
The set of all factors of $w$ is denoted by $\fact (w)$.
Given a set of words $W$, the set of the factors of all the words of $W$ is denoted by $\fact (W)$.
Similarly, given a word $w$, a word $u$ is said to be a {\em prefix} of $w$ if $w = u \beta,$ for some $ \beta \in A^*$.
A set $X$ is said to be {\em prefix} if no word of $X$ is a prefix of another word of $X$.

\begin{definition}
  Let $X$ be a subset of $A^*$.
  A pair of words $(r,s)$ is an {\em $X$-completion} of a word $w$ if $rws\in X^*$.
  A word having an $X$-completion is a {\em completable} word of $X$; conversely, a word with no $X$-completion is an {\em incompletable} word of $X$.
  The set $X$ is {\em complete} if all words of $A^*$ are completable words of $X$; $X$ is {\em incomplete}, otherwise.
\end{definition}

Another crucial notion of this paper is that of synchronizing set.
\begin{definition}
  Let $X$ be a subset of $A^*$.
  A pair $(u,v)\in X^*\times X^*$ is a {\em synchronizing pair} of $X$ if for every $X$-completion $(r,s)$ of $uv$, one has
  \[
    ru,vs\in X^*\,.
  \]
  The set $X$ is {\em synchronizing} if it has a synchronizing pair.
\end{definition}

\begin{example}
  Consider the set
  \[
    X=\{aa,ab,ba,baa,bbb\}
  \]
  on the alphabet $A=\{a,b\}$.
  The pair $(b,aa)$ is a $X$-completion of the word $bbabb$.
  Indeed, one has $b\,bbabb\,aa\in X^*$.
  
  One easily verifies that all words of $A^*$ of length 6 have an $X$-completion.
  On the contrary, the word $v=abbabba$ has no $X$-completion.
  Thus, $v$ is an incompletable word of $X$ of minimal length.
  
  It is not difficult to verify that the pair $(ab,ba)$ is a synchronizing pair of the set $X$.
  Thus, $X$ is a synchronizing set.
\end{example}

The notion of synchronizing pair of a set is strictly related to that of {\em constant}.
A word $c$ of $X^*$ is said to be a {\em constant} of $X$ if, for every $u_1, u_2, u_3, u_4\in A^*$ such that $u_1 c u_2, u_3cu_4 \in X^*$, one has $u_1 c u_4, u_3cu_2 \in X^*$.
The following result holds.
\begin{lemma}
  \label{lemmaConstant}
  Let $X$ be a subset of $A^*$.
  If $(u,v)$ is a synchronizing pair of $X$, then $uv$ is a constant of $X$.
  Conversely, if $c$ is a constant of $X$, then $(c,c)$ is a synchronizing pair of $X$.
\end{lemma}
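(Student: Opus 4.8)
The plan is to verify the two implications directly from the definitions, the only slightly delicate point being the right choice of the auxiliary words $u_1,u_2,u_3,u_4$ in the definition of constant.

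For the first implication, assume $(u,v)$ is a synchronizing pair of $X$. First note that $uv\in X^*$, since $u,v\in X^*$, so $uv$ is an admissible candidate to be a constant. Now take $u_1,u_2,u_3,u_4\in A^*$ with $u_1\,uv\,u_2\in X^*$ and $u_3\,uv\,u_4\in X^*$. Then $(u_1,u_2)$ and $(u_3,u_4)$ are $X$-completions of $uv$, so the synchronizing property yields $u_1u,\ vu_2,\ u_3u,\ vu_4\in X^*$. Since $X^*X^*\subseteq X^*$, we obtain $u_1\,uv\,u_4=(u_1u)(vu_4)\in X^*$ and $u_3\,uv\,u_2=(u_3u)(vu_2)\in X^*$, which is exactly what is required for $uv$ to be a constant of $X$.

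For the converse, assume $c$ is a constant of $X$; in particular $c\in X^*$, hence $(c,c)\in X^*\times X^*$. Let $(r,s)$ be an $X$-completion of $cc$, that is, $r\,cc\,s\in X^*$; we must show that $rc\in X^*$ and $cs\in X^*$. Apply the definition of constant with $u_1=r$, $u_2=cs$, $u_3=u_4=\epsilon$: we have $u_1\,c\,u_2=r\,cc\,s\in X^*$ and $u_3\,c\,u_4=c\in X^*$, and therefore $u_1\,c\,u_4=rc\in X^*$. Symmetrically, taking $u_1=u_2=\epsilon$, $u_3=rc$, $u_4=s$, we have $u_1\,c\,u_2=c\in X^*$ and $u_3\,c\,u_4=r\,cc\,s\in X^*$, whence $u_1\,c\,u_4=cs\in X^*$. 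Thus $(c,c)$ is a synchronizing pair of $X$.

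The argument is entirely elementary; the one thing to get right is feeding the trivial factorization $c=\epsilon\cdot c\cdot\epsilon\in X^*$ into the constant condition, which is precisely what decouples the two conclusions $rc\in X^*$ and $cs\in X^*$. No real obstacle is expected beyond this bookkeeping with the closure property $X^*X^*\subseteq X^*$.
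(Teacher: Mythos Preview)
Your proof is correct. The paper actually states this lemma without proof, so there is no approach to compare against; your direct verification from the definitions is exactly the kind of routine argument the authors presumably had in mind when omitting it.

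One minor remark on the converse: in your first application of the constant property with $u_1=r$, $u_2=cs$, $u_3=u_4=\epsilon$, the definition already gives you \emph{both} conclusions $u_1cu_4=rc\in X^*$ and $u_3cu_2=cs\in X^*$ at once, so the ``symmetrically'' step is redundant (though of course not wrong).
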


\subsection{Complete and synchronizing codes}
The notions of complete and synchronizing sets provide a rich structure in the case that the set is a code.
It is worth to shortly describe some fundamental results on such sets.
A set $X$ of words over an alphabet $A$ is said to be a {\em (variable length) code over $A$} if it fulfills the unique factorization property, that is, for every word $u\in X^*$, there exists a unique sequence $x_1, \ldots, x_k$ of words of $X$ such that $u= x_1 \cdots x_k$.
A well-known example of codes is given by all prefix set which are distinct from $\{\epsilon\}$.

The notion of code is strictly related to the one of {\em monomorphism} of free monoids.
Indeed, let $A$ and $B$ be two alphabets. As is well known, a morphism $h:A^* \rightarrow B^*$ is injective if and only if the letters of $A$ have distinct images and $h(A)$ is a code.

In the sequel, a monomorphism $h: A^* \rightarrow B^*$ such that $h(A)$ is a prefix code will be called {\em prefix encoding}.

The notion of complete code is related to that of maximal code.
Indeed, a regular code $X$ is complete if and only if it is maximal (that is, it is not a subset of another code on the same alphabet).
Moreover, a prefix code $Y$ on an alphabet $A$ is complete if and only if any word of $A^*$ is a prefix of a word of $X^*$ (see, \emph{e.g.}, \cite{BP-book}).

\subsection{Synchronizing automata and the \v Cern\'y conjecture}
As usually, by \emph{finite non-deterministic automaton} we mean a 5-tuple
$\A=\langle Q,A,\delta, I, F\rangle$,
where $Q$ is a finite set of elements called {\em states},
$A$ is the \emph{input alphabet},
$\delta:Q\times A\longrightarrow\mathcal P(Q)$ is the {\em transition function},
and $I,F\subseteq Q$ are the sets of initial and terminal states
(here, $\mathcal P(Q)$ denotes the power set of $Q$).

With any automaton $\A$ is naturally associated a directed labelled finite multigraph
$G(\A)=(Q,E)$, where the set $E$ of edges is defined as 
\[E=\{(p,a,q)\in Q\times A\times Q\mid q\in\delta(p,a)\}.\]

However, in this paper, we will consider only automata such that
$I=F=\{1\}$, that is, with a unique initial and final state denoted $1$.
Such an automaton will be simply identified by the 4-tuple
$\A=\langle Q,A,\delta, 1\rangle$.
The language accepted by such an automaton is $L(\A)=X^*$, where $X$ is the set of the labels of the paths in the graph $G(\A)$, with origin and goal in the state 1, but with no intermediate vertex equal to 1.

The canonical extension of the map $\delta$ to the set $Q\times A^{*}$ will be still denoted by $\delta$.
Moreover, if $P$ is a subset of $Q$ and $u$ is a word of $A^{*}$, we denote by $\delta(P,u)$ and $\delta(P,u^{-1})$ the sets:
\[
\delta (P, u) = \{\delta (s, u)\mid s\in P\}, \quad \delta (P, u^{-1}) = \{s\in Q \mid \delta (s, u)\in P\}.
\]
If no ambiguity arises, the sets $\delta (P, u)$ and $\delta (P, u^{-1})$ are denoted $Pu$ and $ P u^{-1},$ respectively.

An automaton $\A=\langle Q,A,\delta, 1\rangle$ is said to be {\em transitive}
if the graph $G(\A)$ is strongly connected.
It is not difficult to verify that any automaton $\A$ is equivalent to a transitive automaton whose graph is the strongly connected component of $G(\A)$ containing the state $1$.
For this reason, in the sequel, we will consider only transitive automata.

An automaton $\A=\langle Q,A,\delta,1\rangle$ is said to be {\em unambiguous} if for all $u,v\in A^*$ there is at most one state $q\in Q$ such that $q\in\delta(1,u)$ and $1\in\delta(q,v)$.
This is equivalent to say that any word of $L(\A)$ is the label of a unique path
of $G(\A)$ with origin and goal in the state 1.

We say that an unambiguous automaton $\A=\langle Q,A,\delta,1\rangle$ is {\em synchronizing} if there exist two words $w_1, w_2\in A^*$ such that
$Qw_1\cap Q{w^{-1}_2}=\{1\}$.

The automaton $\A$ is {\em deterministic} if for all $q\in Q$ and for all $a\in A$, $\card(qa)\leq 1$.

The automaton $\A$ is {\em complete} if for all $u\in A^*$, the set $Qu$ is non-empty.

The properties of automata defined above reflects some properties of the minimal generating set $X$ of the accepted language $X^*$.
Some of them are summarized in the following lemma.

\begin{lemma}
\label{lm-X}\label{lm-synch}\label{lm-XI}
Let $X\subseteq A^*$ be the minimal generating set of $X^*$
(that is, $X \cap X^2X^* = \emptyset$).
\begin{enumerate}
\item
The set $X$ is a regular code if and only if $X^*$ is accepted by an unambiguous automaton
$\A=\langle Q,A,\delta, 1\rangle$.
\item
The set $X$ is a prefix code if and only if $X^*$ is accepted by a deterministic automaton
$\A=\langle Q,A,\delta, 1\rangle$.
\item
The set $X$ is incomplete if and only if $X^*$ is accepted by a 
transitive incomplete automaton $\A=\langle Q,A,\delta, 1\rangle$.
Moreover, in such a case, a word $w\in A^*$ has an $X$-completion
if and only if $Qw\neq\emptyset$.
\item
The set $X$ is a regular synchronizing code if and only if $X^*$ is accepted by a 
transitive synchronizing unambiguous automaton
$\A=\langle Q,A,\delta, 1\rangle$.
Moreover, in such a case, a pair $(u,v)\in X^*\times X^*$ is a synchronizing pair of $X$
if and only if $Qu\cap Qv^{-1}=\{1\}$.
\end{enumerate}
\end{lemma}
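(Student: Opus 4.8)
The plan is to treat the four items as a single package built on one device: an automaton $\A=\langle Q,A,\delta,1\rangle$ with a unique initial$=$final state $1$, whose \emph{first-return words} — the labels of paths of $G(\A)$ from $1$ to $1$ having no intermediate occurrence of $1$ — are exactly the words of $X$. Each item then splits into an \emph{existence} direction (exhibit such an automaton with the desired property) and a \emph{reading} direction (read the claimed property of $X$ off the path structure of a given automaton). Throughout I would use that $X$ is finite, hence $X^*$ recognizable and the automata below finite, and that by the remark preceding the lemma we may always restrict to the strongly connected component of $1$, which changes neither $L(\A)=X^*$ nor the set of first-return words; so ``there exists an automaton'' may freely be strengthened to ``there exists a transitive automaton''.

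\emph{Items 1 and 2.} For the existence direction, when $X$ is a code I would take the flower automaton of $X$: its states are $1$ together with the pairs $(p,q)$ of nonempty words with $pq\in X$, its transitions force any path leaving $1$ to spell out a word of $X$ before coming back, it is finite since $X$ is, its first-return set is $X$ and $L=X^*$, and it is unambiguous precisely because $X$ is a code. When $X$ is moreover prefix I would instead use the literal automaton whose states are the proper prefixes of the words of $X$, which is deterministic exactly because $X$ is prefix. For the reading direction, the first-return set of any $\A$ is recognizable (split $1$ into an ``in'' and an ``out'' copy and delete it), giving regularity of $X$; if $\A$ is unambiguous and $x_1\cdots x_k=y_1\cdots y_m$ with $x_i,y_j\in X$, then since $X\cap X^2X^*=\emptyset$ every $1$-to-$1$ path labelled by a word of $X$ is first-return, so lifting the common word to its unique $1$-to-$1$ path shows that the set of positions at which that path visits $1$ equals both $\{|x_1\cdots x_i|\}_i$ and $\{|y_1\cdots y_j|\}_j$, forcing $k=m$ and $x_i=y_i$; and if $\A$ is deterministic and $x,xy\in X$ with $y\neq\epsilon$, reading $xy$ is forced to revisit $1$ after the prefix $x$, contradicting that $xy$ is a first-return word, so $X$ is prefix.

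\emph{Item 3.} Fix a transitive $\A$ with $L(\A)=X^*$. For the ``moreover'' clause: if $(r,s)$ is an $X$-completion of $w$, an accepting path for $rws$ is, after the prefix $r$, in some state $p$, and then in some state of $\delta(p,w)\subseteq Qw$, so $Qw\neq\emptyset$; conversely, if $q\in\delta(p,w)$, transitivity gives $r$ with $p\in\delta(1,r)$ and $s$ with $1\in\delta(q,s)$, whence $rws\in X^*$. Thus $w$ is incompletable iff $Qw=\emptyset$, so $X$ is incomplete iff $\A$ is incomplete; the existence direction then follows by taking any transitive automaton recognizing $X^*$ (from Item 1, or directly the flower automaton, restricted to the component of $1$) and applying this equivalence to an incompletable word.

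\emph{Item 4.} Fix a transitive \emph{unambiguous} $\A$ with $L(\A)=X^*$ and prove the ``moreover'' clause first. As $u,v\in X^*$ we have $1\in Qu$ and $1\in Qv^{-1}$. If $Qu\cap Qv^{-1}=\{1\}$ and $(r,s)$ is an $X$-completion of $uv$, an accepting path for $ruvs$ is, after the prefix $ru$, in a state of $Qu\cap Qv^{-1}$, hence in $1$; being the unique accepting path for $ruvs$ it therefore splits at $1$, giving $ru,vs\in X^*$, so $(u,v)$ is a synchronizing pair. Conversely, if $p\in Qu\cap Qv^{-1}$, transitivity lets one build an accepting path $1\xrightarrow{r}q'\xrightarrow{u}p\xrightarrow{v}q\xrightarrow{s}1$; the synchronizing-pair property yields $ru,vs\in X^*$, and comparing the accepting path for $ru$ followed by one for $vs$ with the path just built, unambiguity forces $p=1$. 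For the main equivalence: if $X$ is a regular synchronizing code, Item 1 gives a transitive unambiguous $\A$ recognizing $X^*$ and the ``moreover'' clause turns a synchronizing pair $(u,v)$ into a witness $Qu\cap Qv^{-1}=\{1\}$, so $\A$ is synchronizing; conversely, given $w_1,w_2$ with $Qw_1\cap Qw_2^{-1}=\{1\}$ (so $1\in Qw_1$ and $1\in Qw_2^{-1}$), transitivity yields $r_1,s_2$ with $u:=r_1w_1\in X^*$ and $v:=w_2s_2\in X^*$, and then $Qu\subseteq Qw_1$ and $Qv^{-1}\subseteq Qw_2^{-1}$, so $Qu\cap Qv^{-1}=\{1\}$ and $(u,v)$ is a synchronizing pair of $X$ by the ``moreover'' clause. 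The routine part is checking the flower and literal automata have the stated features; the genuinely delicate step is this ``moreover'' clause of Item 4, where one must ensure that the state reached after the prefix $ru$ along \emph{some} accepting path coincides with the state prescribed by the \emph{unique} accepting path — this is exactly where unambiguity (rather than mere nondeterminism) is needed, together with correctly reading $Qw^{-1}$ as $\{s\in Q:\delta(s,w)\neq\emptyset\}$.
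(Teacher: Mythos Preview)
The paper does not prove this lemma; it is presented as a background result in the preliminaries (``we shortly recall some basic results''), with an implicit pointer to the standard references \cite{BP-book,cdel}. So there is no proof in the paper to compare against.

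Your argument is essentially correct and follows the standard route (flower automaton for Item~1, literal/prefix automaton for Item~2, and the direct path-chasing for Items~3 and~4). Two remarks are worth making. First, you explicitly restrict to \emph{finite} $X$, whereas the lemma is stated for \emph{regular} $X$ (Items~1 and~4) and implicitly for regular $X$ in Items~2 and~3. In the paper's context $X$ is always finite, so this is harmless, but be aware that the flower automaton as you describe it (one state per proper factorization $pq\in X$) is finite only when $X$ is finite; for a merely regular code one needs the Sch\"utzenberger quotient of the flower automaton or a direct trimming of a recognizer of $X^*$. Second, in the ``moreover'' of Item~4 your forward direction (from $Qu\cap Qv^{-1}=\{1\}$ to ``$(u,v)$ is synchronizing'') does not actually need unambiguity: any accepting path for $ruvs$ already passes through a state of $Qu\cap Qv^{-1}$ after the prefix $ru$, and that state must be $1$, so $ru,vs\in X^*$. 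Unambiguity is needed precisely where you use it in the converse, to force $p=1$ by uniqueness of the accepting path. With those two observations, your sketch would serve as a complete proof of the lemma as used in the paper.
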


As is well known, a deterministic automaton $\A$ is synchronizing if and only if there is a word $u$ such that the set $Qu$ is reduced to a single state.

Such a word is said to be a \emph{synchronizing word} of $\A$.
The following celebrated conjecture has been raised in \cite{Cerny}.
\medskip\\ {\bf \v Cern\'y Conjecture.} {\em Each synchronizing and complete deterministic automaton with $n$ states has a synchronizing word of length $(n-1)^{2}$.} \medskip

Let us recall an important problem related to the \v Cern\'y Conjecture.
Let $G$ be a finite, directed multigraph with all its vertices of the same outdegree.
Then $G$ is said to be {\em aperiodic} if the greatest common divisor of the lengths of all cycles of the graph is $1$.
The graph $G$ is called a {\em AGW-graph} if it is aperiodic and strongly connected.
The reason why such graphs take this name is due to the fact that these structures were first introduced and studied 
 in the context of Symbolic Dynamics by Adler, Goodwyn and Weiss   in \cite{AGW}.

A {\em synchronizing coloring} of $G$ is a labeling of the edges of $G$ that transforms it into a complete, deterministic and synchronizing automaton.
The {\em Road coloring problem} asks for the existence of a synchronizing coloring for every AGW-graph.
In 2007, Trahtman proved the following remarkable result \cite{trah}.

\begin{theorem}
  \label{trat-thm}
  Every AGW-graph has a synchronizing coloring.
\end{theorem}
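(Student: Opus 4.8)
The plan is an induction on $|Q|$, with the relation of \emph{stability} as its engine. Fix any colouring of $G$; it makes $G$ a complete deterministic automaton, so every letter acts as a total transformation of $Q$. Call two states $p,q$ \emph{stable} if for every word $u$ there is a word $v$ with $puv=quv$. I would first dispatch the routine facts: stability is an equivalence relation and a congruence of the automaton (invariant under every letter), hence the quotient graph $G/\rho$ by the stability congruence $\rho$ is again strongly connected, of the same out-degree, and aperiodic --- in a word, again an AGW-graph --- and it has strictly fewer vertices as soon as $\rho$ is nontrivial. One also records the ``decreasing-image'' observation: if $Qw$ happens to lie inside a single $\rho$-class, then repeatedly appending a word that merges two of the surviving states yields a word sending $Q$ to a single state, i.e.\ a synchronizing colouring. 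Together with a (careful but otherwise routine) lift of a synchronizing colouring of $G/\rho$ back to a colouring of $G$ for which $\rho$ is still contained in the stability relation, this reduces the whole theorem to the

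\medskip\noindent\textbf{Main Lemma.} \emph{Every AGW-graph with at least two states admits a colouring possessing a nontrivial stable pair.}

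\medskip\noindent
(The one-state AGW-graph carries $d$ loops and is synchronized by any colouring, which starts the induction.) To prove the Main Lemma I would begin from an arbitrary colouring, pick a word $w$ whose image $Qw$ has minimal cardinality, and set $F=Qw$. If $|F|=1$ the colouring is already synchronizing and every pair of states is stable, so assume $|F|\ge 2$. The merit of choosing $F$ of minimal size is the resulting rigidity: for every word $u$ one has $|Fu|=|F|$ and $s\mapsto su$ is a bijection of $F$ onto the (again minimal) image $Fu$; letters act on minimal images by bijections. This is the feature that will let a single local change of colour propagate into a coincidence of two states of $F$ that survives \emph{every} word.

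The hypothesis of aperiodicity enters next. The structural input is that in an AGW-graph with $\ge 2$ states one can choose a colour $\alpha$ and a colouring so that the $\alpha$-edges form a functional graph that is \emph{not} a permutation; sharpening this, one arranges that its unique ``rho'' has a nonempty tree part spanning $G$, with a state $r$ of maximal distance $\ge 1$ from the $\alpha$-cycle. (If no colouring of $G$ exhibited such a non-permutation colour class, a short argument would force the cycle lengths of $G$ to share a divisor $g>1$; but then grading $Q$ modulo $g$ shows no colouring of $G$ can be synchronizing --- so the use of aperiodicity here is unavoidable.) Working with such a colouring, I would single out two distinct states of the minimal image $F$ lying at maximal level (pulling them there along a suitable word via the bijectivity above if necessary), re-colour one carefully chosen edge --- an edge entering a level-maximal state, or an edge onto the $\alpha$-cycle --- and check that under the modified colouring these two states of $F$ have become stable: iterating $\alpha$ slides them down the tree in lockstep, the re-colouring forces them to collapse, and minimality of $|F|$ keeps the image from ever escaping this behaviour.

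The main obstacle is exactly this last verification: that the one-edge re-colouring creates a pair that is \emph{stable}, not merely mergeable once but mergeable after every word. This is where the rho-structure furnished by aperiodicity and the bijectivity of the action on minimal images must be combined with care, and where the bookkeeping of levels in the $\alpha$-tree and the precise choice of the recoloured edge are delicate --- it is the technical heart of Trahtman's argument. Everything else (the congruence properties of stability, the preservation of AGW-ness under quotient, the lift, and the decreasing-image argument inside a stable class) is routine once the Main Lemma is in hand.
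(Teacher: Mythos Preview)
The paper does not prove this theorem. Theorem~\ref{trat-thm} is stated as a known result due to Trahtman and is simply cited as \cite{trah}; the paper uses it as a black box (in the proof of Proposition~\ref{sec:red-synch-kt}) without giving any argument for it. There is therefore no ``paper's own proof'' to compare your proposal against.

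For what it is worth, your outline does follow the broad architecture of Trahtman's original argument: the stability congruence, the reduction to the quotient AGW-graph, and the Main Lemma producing a nontrivial stable pair by analyzing the tree structure of a maximal-tree colour class and performing a local recolouring. As you yourself acknowledge, the delicate part is the verification that the recolouring yields a genuinely \emph{stable} pair; your sketch gestures at this but does not carry it out, so as written it is a plan rather than a proof. If the intent was to reproduce what the paper does, however, the correct answer is simply to cite \cite{trah} and move on.
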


We recall that by the well known Kraft-McMillan Theorem (see, \emph{e.g.}, \cite{BP-book}),
integers $k_1, \ldots,k_n, d>0$ are the code-word lengths of a maximal (or, equivalently, complete) prefix code over $d$ letters if and only if they satisfy the condition
\begin{equation}
\label{eq:kraft}
\sum_{i=1}^n d^{-k_i}=1.
\end{equation}

We conclude this section with an application of Trahtman Road-coloring Theorem,
which furnishes a characterization of the code-word lengths of finite complete synchronizing codes.

\begin{proposition}
	\label{sec:red-synch-kt}
	Let $k_1, \ldots,k_n, d>0$ be such that
	\[
	\gcd(k_1, k_2,\ldots,k_n)=1\,,\qquad \sum_{i=1}^n d^{-k_i}=1.
	\]
	Then $k_1,\ldots,k_n$ are the code-word lengths of a synchronizing complete prefix code over $d$ letters.
\end{proposition}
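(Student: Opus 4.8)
The plan is to realize the required code as the minimal generating set attached to a synchronizing colouring, produced by Trahtman's Road Colouring Theorem, of an automaton built from an arbitrary complete prefix code with the prescribed word lengths.

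First I would invoke the Kraft--McMillan Theorem: since $\sum_{i=1}^n d^{-k_i}=1$, equation \eqref{eq:kraft} supplies a complete (equivalently, maximal) prefix code $Y$ over some $d$-letter alphabet $A$ whose multiset of word lengths is $k_1,\dots,k_n$. Let $\A=\langle Q,A,\delta,1\rangle$ be a deterministic automaton recognizing $Y^{*}$, which exists by Lemma~\ref{lm-X}(2). Since $Y$ is a complete prefix code, $\A$ may be taken complete and transitive (see \cite{BP-book}), so that $G(\A)$ is strongly connected and all its vertices have out-degree $d$. Moreover, by the very definition of $\A$, the paths of $G(\A)$ from $1$ to $1$ with no intermediate occurrence of $1$ are exactly those labelled by the words of $Y$; in particular there is such a path of length $k_i$ for each $i$, so the greatest common divisor of the lengths of all cycles of $G(\A)$ divides $\gcd(k_1,\dots,k_n)=1$. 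Hence $G(\A)$ is an AGW-graph.

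Next I would apply Theorem~\ref{trat-thm}: $G(\A)$ admits a synchronizing colouring, i.e.\ there is a complete, deterministic, synchronizing automaton $\A'=\langle Q,A,\delta',1\rangle$ with $G(\A')=G(\A)$ (the same underlying multigraph, only the labels changing), keeping $1$ as distinguished state. Let $X$ be the minimal generating set of $L(\A')=X^{*}$. Since $\A'$ is deterministic, hence unambiguous, $X$ is precisely the set of labels of the paths of $G(\A')$ from $1$ to $1$ with no intermediate occurrence of $1$ --- no such label can be a nontrivial product of shorter ones, since its unique path in $\A'$ would then revisit $1$. Consequently: (i)~$X$ is a prefix code, by Lemma~\ref{lm-X}(2); (ii)~the word lengths of $X$ are the lengths of those first-return paths, which depend only on the multigraph, so, since $G(\A')=G(\A)$, this multiset is again $k_1,\dots,k_n$ (in particular $X$ is finite with $n$ words); (iii)~$X$ is a prefix code over $d$ letters with $\sum_{x\in X}d^{-|x|}=\sum_{i=1}^n d^{-k_i}=1$, hence by Kraft--McMillan it is maximal, hence complete; (iv)~$X$ is synchronizing: picking a word $w$ with $|Qw|=1$ (which exists since $\A'$ is synchronizing) and, by strong connectedness, a word $w'$ with $\delta'(Qw,w')=\{1\}$, we get $Q(ww')=\{1\}$, while $Q(ww')^{-1}=Q$ since $\A'$ is complete and deterministic, so $Q(ww')\cap Q(ww')^{-1}=\{1\}$, and by Lemma~\ref{lm-synch}(4) the pair $(ww',ww')$ is a synchronizing pair of $X$. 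Thus $k_1,\dots,k_n$ are the code-word lengths of the synchronizing complete prefix code $X$ over $d$ letters.

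The step I expect to be the crux is (ii): one must be sure that recolouring, which alters the labels, does not alter the word lengths. This is exactly where it is essential that the Road Colouring Theorem preserves the underlying multigraph, so that the set of paths from the distinguished vertex back to itself with no intermediate return --- hence the multiset of their lengths --- is unchanged; combined with the unambiguity of $\A'$, this identifies that multiset with the multiset of code-word lengths of $X$. A secondary point to handle with care is the passage from a classically synchronizing complete deterministic automaton to the notion of synchronizing unambiguous automaton used here, which is precisely what step (iv) does via strong connectedness and the fact that $Qv^{-1}=Q$ for every $v$ in a complete deterministic automaton.
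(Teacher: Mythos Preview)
Your proof is correct and follows essentially the same route as the paper's: Kraft--McMillan to obtain a complete prefix code with the prescribed lengths, pass to its deterministic automaton, observe that the underlying graph is an AGW-graph, apply Trahtman's Road Colouring Theorem, and read off a synchronizing complete prefix code with the same first-return lengths. Your write-up is in fact more explicit than the paper's in a couple of places the paper leaves terse---the aperiodicity argument, the invariance of the multiset of first-return lengths under recolouring (your step (ii)), and the bridge in (iv) from the classical notion of a synchronizing complete deterministic automaton to the $Qw_1\cap Qw_2^{-1}=\{1\}$ condition needed to invoke Lemma~\ref{lm-synch}(4)---but these are elaborations of the same argument, not a different one.
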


\begin{proof}
	Let $A$ be a $d$-letter alphabet.
	By Kraft-McMillan Theorem, there exists a prefix code $X=\{x_1, \ldots, x_n\}$ over $A$ such that, for every $i=1, \ldots, n$, $|x_i|=k_i$.
	Moreover, such a code is maximal and, consequently, complete.
	
	By {Lemma} \ref{lm-X}, $X^*$ is accepted by a complete deterministic automaton ${\cal A}_X$.
	
	Let $G$ be the underlying graph of ${\cal A}_X$, {\em i.e.}, the graph obtained from ${\cal A}_X$ by ripping off all the labels of its edges.
	Since $\gcd(k_1, k_2,\ldots,k_n)=1$, $G$ is an AGW-graph.
	By {Theorem} \ref{trat-thm}, there exists a synchronizing coloring ${\cal A}'$ of $G$.
	Let $L$ be the language recognized by ${\cal A}'$.
	Again by Lemma~\ref{lm-X}, $L=Y^*$ for a suitable prefix complete synchronizing code $Y$.
	Moreover, by construction, one has $Y=\{y_1, \ldots, y_n\}$ with $|y_i|=|x_i|=k_i$ for every $i=1, \ldots, n$, $|y_i|=k_i$. \qed
\end{proof}

\begin{remark}
	It is worth noticing that the code-word lengths of any finite synchronizing complete  code over $d$ letters satisfies both the conditions of Proposition \ref{sec:red-synch-kt}.
	
	Indeed, as a straightforward consequence of Kraft-McMillan Theorem, the second condition is verified by any maximal (or, equivalently, complete) finite prefix code over $d$ letters.
	
	In order to verify the first one, let $X$ be a finite synchronizing complete code, $(u,v)\in X^*\times X^*$ be a synchronizing pair of $X$, $a\in A$ be a letter, and $(r,s)$ be an $X$-completion of the word $uvauv$.
	Then, one has $ruvauvs\in X^*$ and, consequently, $ru,vau,vs\in X^*$.
	One derives that the greatest common divisor $m$ of the code-word lengths of $X$ has to divide $|u|$, $|v|$, $|vau|$ and also $|vau|-|u|-|v|=1$.
	Thus, $m=1$.
\end{remark}
%
%
%

\section{The main result}
\label{tre}
The main result of this paper is related to a problem that was formulated in \cite{Restivo} by Restivo.
Let $\cal L$ be a class of finite languages.
For all $n>0$ we set
\[
  R_{\cal L}(n)=\sup_{d\geq 1}R_{\cal L}(n,d)\,,\quad C_{\cal L}(n)=\sup_{d\geq 1}C_{\cal L}(n,d)\,.
\]
In \cite{Restivo}, it was conjectured that if $\cal F$ is the class of all finite languages, then $R_{\cal F}(n)\leq 2n^2$.If we restrict ourselves to prefix codes, we get
\begin{proposition}
  {{\em (\cite{Restivo})}} Let ${\cal P}$ be the class of finite prefix codes.
  Then
  \[
    R_{\cal P}(n)\leq 2n^2.
  \]
\end{proposition}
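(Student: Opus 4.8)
\medskip
\noindent\textbf{Proof plan.} I would recast the statement in terms of the literal automaton of $X$ and then construct a short incompletable word of a prescribed shape. Since $X$ is a prefix code, $X^{*}$ is accepted by the literal automaton $\mathcal A=\langle Q,A,\delta,1\rangle$, whose states are the proper prefixes of the words of $X$, with $1=\epsilon$; it is deterministic and transitive (Lemma~\ref{lm-X}), $|q|\le n-1$ for every $q\in Q$, and for $q\in Q$, $a\in A$ one has $\delta(q,a)=qa$ if $qa\in Q$, $\delta(q,a)=1$ if $qa\in X$, and $\delta(q,a)=\emptyset$ otherwise. As $X$ is incomplete, $\mathcal A$ is incomplete (Lemma~\ref{lm-X}), and by the same lemma a word $w$ is incompletable in $X$ exactly when $Qw=\emptyset$. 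Hence it suffices to produce $w$ with $|w|\le 2n^{2}$ and $Qw=\emptyset$.

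First I extract a short ``dead word''. Since some transition of $\mathcal A$ is undefined, fix $p_0\in Q$ and $a_0\in A$ with $\delta(p_0,a_0)=\emptyset$ and put $z_0=p_0a_0$; then $\delta(1,p_0)=p_0$, so $\delta(1,z_0)=\emptyset$, i.e. $z_0$ is a word of length $|p_0|+1\le n$ that is not a prefix of any word of $X^{*}$. The key structural fact is that $\mathcal A$ is \emph{layered}: every edge not returning to $1$ strictly increases the length $|q|$ of its source, so a path starting at an arbitrary state and avoiding $1$ has at most $n-1$ edges; consequently, from any state, within $n$ steps every path visits $1$ or dies.

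The plan is then to prepend to $z_0$ a ``transport word'' of quadratic length: take $w=v_{1}z_0v_{2}z_0\cdots v_{k}z_0$ with $k\le n$ and each $|v_{i}|\le n-1$, so that $|w|<2kn\le 2n^{2}$, choosing the $v_{i}$ so that the surviving set $S_{i}:=Q(v_{1}z_0\cdots v_{i}z_0)$ loses one ``layer'' per round. Using the layered structure one picks $v_{i}$ (of length $\le n-1$) that drives the then-alive states either into $1$, where the next copy of $z_0$ annihilates them, or directly into death, and thereby lowers by one a monotone level-count of $S_{i}$ ranging over $\{0,1,\dots,n\}$; after at most $n$ rounds $S_{k}=\emptyset$, i.e. $w$ is incompletable.

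The main obstacle is to make ``$S_{i}$ loses a layer'' rigorous: reading an arbitrary word scatters the surviving states over all levels $0,\dots,n-1$, so the $v_{i}$ must be chosen carefully and one must isolate a genuinely monotone quantity — say the number of distinct lengths $|q|$, $q\in S_{i}$, not yet committed to dying, peeled off from the top downwards — that the layered structure forces to strictly decrease, while verifying that the intervening copies of $z_0$ do not recreate high levels. Equivalently, one may phrase the target through the right ideal of non-prefixes: $A^{*}\setminus\mathrm{Pref}(X^{*})=X^{*}CA^{*}$, where $C=\{\,pa:\ p\in Q,\ pa\notin Q\cup X\,\}$ is the finite set of ``cores'', all of length $\le n$; one then needs $qw\in X^{*}CA^{*}$ for every proper prefix $q$ of a code word, which one would obtain by routing $q$ across at most $n$ cores, each preceded by a completion word of length $\le n-1$, again giving $|w|\le 2n^{2}$.
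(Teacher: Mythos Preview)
The paper does not prove this proposition; it only cites Restivo's original paper, so there is no in-paper proof to compare against. That said, your plan is the right framework and the word shape $w=v_1z_0v_2z_0\cdots v_kz_0$ with $|v_i|\le n-1$, $|z_0|\le n$ is exactly what one wants. The genuine gap is the one you flag yourself: you never identify a quantity that forces $k\le n$. Your candidate ``number of distinct levels occurring in $S_i$'' is not obviously monotone, because reading $v_iz_0$ can scatter the surviving states across levels in an uncontrolled way; two states at the same level may land at different levels, and new levels may appear even as others vanish. So as written the argument only yields $k\le |Q|$, which can be exponential in $n$.

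The missing idea is much simpler than a level-count. Pad $z_0$ on the right to a word $z$ of length exactly $n$ (still $\delta(1,z)=\emptyset$). From any state $q\ne 1$, the layered structure forces a visit to state~$1$ while reading $z$; if $i$ is the position of the \emph{last} such visit, then $\delta(q,z)=\delta(1,z[i{+}1..n])$. Hence
\[
\delta(Q,z)\ \subseteq\ \{\,\delta(1,s)\ :\ s\text{ a proper suffix of }z\,\},
\]
a set of size at most $n$. Now the naive quantity $|S_i|$ \emph{does} work: start with $S_1=\delta(Q,z)$, so $|S_1|\le n$; at each round choose $q_i\in S_{i-1}$, a word $v_i$ of length $\le n-1$ with $\delta(q_i,v_i)\in\{1,\emptyset\}$ (your layered argument gives this), and set $S_i=\delta(S_{i-1},v_iz)$. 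Determinism gives $|S_i|\le|S_{i-1}|-1$, so $S_{n+1}=\emptyset$, and $w=z\,v_1z\,v_2z\cdots v_nz$ has length at most $n(n{+}1)+n(n{-}1)=2n^2$. In short, you were looking for an exotic monotone quantity when the ordinary cardinality suffices, once you observe that a single block of length $n$ already collapses $Q$ to at most $n$ states.
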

However, in the general case, the previous bound was disproved in \cite{Fici}.
A more general and larger counterexample is given in \cite{gusev}.
We can thus state a slightly weaker version of the problem as follows.
\begin{conjecture}
  {(Restivo's Conjecture)} Let $\cal F$ be the class of all finite languages.
  Then $R_{\cal F}(n)=O(n^2)$.
\end{conjecture}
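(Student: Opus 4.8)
As Restivo's Conjecture is still open, what follows is a plan of attack rather than a proof. I would work in the automaton-theoretic setting provided by Lemma~\ref{lm-X}. Given a finite incomplete language $X$ with $\ell(X)\le n$, replace $X$ by its minimal generating set (this does not increase $\ell$ and does not change $X^{*}$) and take a transitive incomplete automaton $\mathcal A=\langle Q,A,\delta,1\rangle$ accepting $X^{*}$; by part~3 of that lemma a word $w$ is incompletable exactly when $Qw=\emptyset$. So the goal is: assuming only that $Qw=\emptyset$ for \emph{some} $w$, produce such a $w$ of length $O(n^{2})$. Equivalently, writing $F=\fact(X^{*})$ — a factorial language whose complement is $A^{*}WA^{*}$ for an antichain $W$ of minimal forbidden factors — I must show that $W\neq\emptyset$ forces $\min_{w\in W}|w|=O(n^{2})$. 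The key reduction is that this follows immediately from a \emph{horizon lemma} of the form ``if $w$ has no forbidden factor of length $\le g(n)$ then $w\in F$'', for some $g(n)=O(n^{2})$: any globally forbidden word then contains a forbidden factor of length $\le g(n)$, and that factor is the incompletable word we want.

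The real difficulty — and the reason the original $2n^{2}$ bound fails for general finite languages, see~\cite{Fici,gusev} — is the gap between local and global admissibility: a word may lie in $F$ window-by-window (every factor up to a given length belongs to $F$) and yet fail to lie in $F$, because membership $w\in F$ really depends on reconciling a global $X$-factorization of some $\alpha w\beta$, information not recorded by a bounded sliding window. One cannot route around this by counting states: the accepting automaton of $X^{*}$ can have arbitrarily many states for fixed $n$ (many short code-words over a large alphabet), whereas the sought bound must be uniform in the alphabet size and in $|X|$, so the proof has to be a genuine combinatorics-on-words argument about $F$, not a reachability bound in $\mathcal A$. One might hope to lift the proof of the prefix-code case (the Proposition above), but that argument leans on determinism of the accepting automaton — equivalently, uniqueness of the $X$-factorization — which is precisely what is lost in general.

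So the heart of any proof is the quantitative horizon lemma, or a sufficient weakening of it. Concretely, I would scan a candidate word $w$ from left to right while tracking the set of ``live configurations'' at each position — the residual suffixes of words of $X$ that an occurrence $\alpha w\beta\in X^{*}$ could currently be inside, together with just enough bounded left-context to certify the partial $X$-factorization so far — and argue that a word much longer than $O(n^{2})$ cannot stay globally live without already containing a short forbidden factor: either the live-configuration set becomes eventually periodic along the run, after which a pumping argument with period at most $n$ should expose a short forbidden pattern, or it keeps evolving, and the crux is to show that it can evolve only $O(n^{2})$ times before such a pattern is forced. Making ``much longer than $O(n^{2})$'' precise — i.e. proving that local consistency over a window of linear (or at worst quadratic) width already forces global consistency — is where I expect essentially all of the work and all of the difficulty to lie; the remainder is the routine bookkeeping of assembling the incompletable word from a shortest bad window plus the $O(n)$ padding needed to realize it inside a word of $X^{*}$.
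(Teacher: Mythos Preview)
The statement is a \emph{conjecture}, and the paper offers no proof of it: Restivo's Conjecture is presented precisely as an open problem, with the paper's contribution being the \emph{conditional} inequality $C_{\cal M}(n,d)\le 2R_{\cal F}(n,d+1)+2n-2$ (Proposition~\ref{main}), not a resolution of the conjecture itself. You correctly identify this at the outset and make no claim to a proof, so there is nothing to compare your argument against.

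Your plan of attack is a reasonable framing of the difficulty. The reduction to a ``horizon lemma'' is accurate --- a bound $g(n)=O(n^2)$ on the shortest minimal forbidden factor of $\fact(X^*)$ is equivalent to the conjecture --- and your diagnosis of why the prefix case (the Proposition from~\cite{Restivo}) does not transfer is on point: that argument uses determinism of the accepting automaton in an essential way. You are also right that counting states cannot work, since the bound must be uniform in $d$ and $|X|$. That said, the ``live configurations'' sketch in your final paragraph is where the entire problem lives, and nothing you write there goes beyond restating the obstacle: the set of residual suffixes compatible with a partial $X$-factorization is exactly the information tracked by the (possibly exponential-size) automaton you have just ruled out as a tool, and neither the eventual-periodicity nor the ``can evolve only $O(n^2)$ times'' alternative comes with a mechanism. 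This is not a flaw in a proof --- you do not claim one --- but you should be aware that the plan as written does not yet isolate a new idea.
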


In this context, the main result of this paper is the following.

\begin{proposition}
  \label{main}
  Let ${\cal M}$ be the class of complete finite codes.
  For all $n,d>0$,
  \[
    C_{\cal M}(n,d)\leq 2R_{\cal F}(n,d+1)+2n-2.
  \]
\end{proposition}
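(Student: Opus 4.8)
The plan is to convert the search for a short synchronizing pair of a complete code into the search for a short incompletable word of a suitable incomplete language over one additional letter, and then to invoke the definition of $R_{\cal F}$. So fix a complete finite synchronizing code $X\subseteq A^*$ with $|A|=d$ and $\ell(X)\le n$; the goal is to produce a synchronizing pair $(u,v)$ of $X$ with $|uv|\le 2R_{\cal F}(n,d+1)+2n-2$. By Lemma~\ref{lemmaConstant} it is enough to exhibit a constant $c\in X^*$ of $X$ of length at most $R_{\cal F}(n,d+1)+n-1$: then $(c,c)$ is a synchronizing pair and $|cc|=2|c|$ yields the bound. Pick a fresh letter $\#\notin A$ and set $B=A\cup\{\#\}$, so $|B|=d+1$, and write $\pi\colon B^*\to A^*$ for the morphism erasing $\#$.

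The core of the argument is the construction of a finite language $Y\subseteq B^*$ with $\ell(Y)\le n$ that turns synchronization of $X$ into incompleteness. I would let $Y$ consist of the codewords of $X$ together with every word obtained from a codeword $x\in X$ by inserting one occurrence of $\#$ strictly between two consecutive letters of $x$, retaining only those of length at most $n$ (so that a codeword of maximal length $n$ contributes only itself); then $\ell(Y)\le n$ and $Y$ is a language over $d+1$ letters. One checks that $Y^*$ is exactly the set of $w\in B^*$ with $\pi(w)\in X^*$ such that, in the (unique) $X$-factorization of $\pi(w)$, every $\#$ of $w$ lies strictly inside some codeword, at most one $\#$ per codeword, and no $\#$ inside a codeword of length $n$. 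The two facts that drive the proof are: \emph{(i)} if $(u,v)$ is a synchronizing pair of $X$, then $u\#v$ is incompletable in $Y$ — indeed a $Y$-completion of $u\#v$ would produce $R,S\in A^*$ with $RuvS\in X^*$ in which the cut between $u$ and $v$ falls strictly inside a codeword, i.e. $Ru\notin X^*$, contradicting the defining property of a synchronizing pair; since $X$ is synchronizing, this already shows that $Y$ is incomplete; and \emph{(ii)} conversely, from any incompletable word $z$ of $Y$ one can manufacture a constant $c$ of $X$ by erasing the $\#$'s of $z$ and appending a word of length $<n$ that closes off the dangling partial codeword, so that $|c|\le|z|+n-1$.

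Granting \emph{(i)} and \emph{(ii)}, the conclusion is immediate: $Y$ is an incomplete finite language on $d+1$ letters with $\ell(Y)\le n$, so by definition of $R_{\cal F}$ it has an incompletable word $z$ with $|z|\le R_{\cal F}(n,d+1)$; by \emph{(ii)}, $X$ has a constant $c$ with $|c|\le R_{\cal F}(n,d+1)+n-1$, hence the synchronizing pair $(c,c)$ has length $|cc|=2|c|\le 2R_{\cal F}(n,d+1)+2n-2$. As $X$ was an arbitrary complete synchronizing code over $d$ letters of size at most $n$, this gives $C_{\cal M}(n,d)\le 2R_{\cal F}(n,d+1)+2n-2$.

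\textbf{Main obstacle.} Everything hinges on step \emph{(ii)}, and the difficulty is that the correspondence between incompletable words of $Y$ and synchronizing pairs of $X$ is not exact: the word supplied by $R_{\cal F}$ need not have the clean shape $u\#v$ with a single $\#$, and the maximal-length codewords of $X$ — which, to respect $\ell(Y)\le n$, cannot carry an interior $\#$ — give rise to words that are incompletable in $Y$ without directly exhibiting a synchronizing pair. The additive term $2n-2$ is precisely the price of repairing this slack: on the projection $\pi(z)$ one must complete a partial codeword to a codeword boundary, costing at most $n-1$ letters, in order to land inside $X^*$ and obtain a genuine constant, and the factor $2$ is the passage from $c$ to the pair $(c,c)$. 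The delicate point is to prove that the repaired word really is a constant — that incompletability of $z$ forces $Ru\in X^*$ (and $vS\in X^*$) for \emph{every} $X$-completion of the relevant word — and this rests on unique factorization in the code $X$ together with the standard density properties of complete codes (every word is both a prefix and a suffix of a word of $X^*$).
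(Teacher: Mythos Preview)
Your step \emph{(ii)} is not merely delicate --- it is false, and the construction of $Y$ cannot be salvaged. Since every word of your $Y$ that contains $\#$ has it strictly in the interior, the two-letter word $\#\#$ is never a factor of any element of $Y^*$; hence $z=\#\#$ is incompletable in $Y$ for \emph{every} choice of $X$. Your recipe would then produce a constant $c$ of $X$ with $|c|\le |z|+n-1=n+1$. But take the codes $X_n=aA^{n-1}\cup bA^{n-2}$ discussed immediately after the statement of the proposition: every synchronizing pair of $X_n$ has total length at least $(n-1)^2$, so by Lemma~\ref{lemmaConstant} every constant $c$ of $X_n$ satisfies $2|c|\ge(n-1)^2$, contradicting $|c|\le n+1$ as soon as $n\ge 5$. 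The underlying problem is structural: your $Y$ has a wealth of short incompletable words --- anything that forces two $\#$'s into one factor, or a $\#$ into a position only length-$n$ codewords can cover --- whose $A$-projections carry no synchronization information about $X$ whatsoever, so there is simply no mechanism linking ``$z$ is $Y$-incompletable'' to ``$\pi(z)$, suitably completed, is a constant of $X$''.

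The paper avoids this trap by not aiming for a constant at all. It \emph{starts} from a given synchronizing pair $(v_1,v_2)$ of $X$, writes $v_1=ua$, and introduces the new letter at the level of the unambiguous automaton $\A_X$: the fresh letter $a'$ acts like $a$, except that arrows into the state $1$ are suppressed from states inside $Qu$ and created from states outside $Qu$. The resulting automaton accepts some $Y^*$ with $\ell(Y)\le\ell(X)$ and is incomplete because $\delta'(Q,ua'v_2)=\emptyset$. From a shortest incompletable word of $Y$ with the fewest $a'$'s one extracts a word $w_1\in A^*$ with $|w_1|\le R_{\cal F}(n,d+1)$ and $Qw_1\subseteq Qv_1$ --- not a constant, merely a word that shrinks the reachable set at least as much as $v_1$ does. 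A mirror construction gives $w_2$ with $Qw_2^{-1}\subseteq Qv_2^{-1}$, and an $X$-completion $(r,s)$ of $w_1w_2$ with $|r|,|s|\le n-1$ then makes $(rw_1,w_2s)$ a synchronizing pair. So the factor $2$ and the term $2n-2$ come from two separate invocations of $R_{\cal F}$ (one for $w_1$, one for $w_2$) together with two short completion tails, not from doubling a single constant.
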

Before proving Proposition \ref{main}, it is convenient to discuss some interesting consequences of this result.
First, if Restivo's conjecture is true, we get
\[
  C_{\cal M}(n) = O(n^2).
\]
Moreover, the bound above would be sharp, as we explain below.
Consider the prefix code $X_n=aA^{n-1} \cup bA^{n-2}$ on the alphabet $A=\{a,b\}$.
The minimal automaton accepting $X_n ^*$ has been studied in \cite{A}, where it has been proved that the minimal length of its synchronizing words is $n^2-3n+3$.
  From this, one derives that any synchronizing pair $(w_1,w_2)$ of $X_n$ verifies $|w_1w_2|\geq(n-1)^2$.
In particular, a synchronizing pair of $X_n$ of minimal length is $((ab^{n-2})^{n-1},\epsilon)$.
This provides the lower bound
\[
  {\cal C}_{\cal M}(n, 2) \geq{\cal C}_{\cal P}(n, 2) \geq (n-1)^2,
\]
for the parameter ${\cal C}_{\cal M}(n, 2)$.

It is also worth to do a remark on a recent result by B\'eal and Perrin.
In \cite{BP} (cf.
also \cite{BBP}), it is proved that a synchronizing complete prefix code $X$ with $n$ code-words has a synchronizing word of length $2(n-2)(n-3)+1$.
This result is derived from an upper bound to the length of shortest synchronizing words of synchronizing one-cluster automata.
However, in view of Proposition \ref{main} and Restivo's conjecture, this bound seems of no help in obtaining a good evaluation of the parameter $C_{\cal P}(n,2)$, as one may have $n\simeq 2^{\ell(X)}$.
This suggests that a bound in term of the size of $X$ may be more informative than a bound in terms of the cardinality.

\subsection{Proof of Proposition \ref{main} } Let us now proceed to prove Proposition \ref{main}.
For this purpose, let $X$ be a finite complete synchronizing code over a $d$-letter alphabet $A$ and let $n=\ell (X)$.
Let ${\cal A}_X = \langle Q, A, \delta, 1 \rangle$ be the unambiguous automaton that accepts $X^*$ (see Lemma \ref{lm-X}).
The proof of Proposition \ref{main} is based upon the following lemma.
\begin{lemma}
  \label{main-lm}
  Let $(v_1,v_2)$ be a synchronizing pair of $X$.
  There exist words $w_1, w_2\in A^*$ such that
  \[
    |w_1|,|w_2|\leq R_{\cal F} (n, d+1),\quad Qw_1\subseteq Qv_1,\quad Q{w_2}^{-1}\subseteq Qv_2^{-1}.
  \]
\end{lemma}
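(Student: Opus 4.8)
The plan is to turn the pair $(v_1,v_2)$ into an \emph{incompleteness} phenomenon for a related finite language over $d+1$ letters, so that Restivo-type bounds apply. First I would recall from Lemma \ref{lm-synch} that $(v_1,v_2)$ being a synchronizing pair means $Qv_1\cap Qv_2^{-1}=\{1\}$; in particular $Qv_1$ and $Qv_2^{-1}$ are well-defined non-empty subsets of $Q$, and $1$ lies in both. The idea is to capture the set $P=Qv_1$ by a ``reachability'' language: pick a fresh letter $c\notin A$, form the alphabet $B=A\cup\{c\}$, and build a finite language $Y$ over $B$ whose structure forces that a word $u\in A^*$ satisfies $Qu\subseteq P$ precisely when a certain word $cu$ (or a word of the shape $c\,u\,c$, say) is incompletable in $Y$. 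Concretely, one wants $Y^*$ to be (essentially) the language accepted by the automaton $\mathcal A_X$ after restricting the set of admissible states in a way recorded by the extra letter $c$; the extra letter is exactly the mechanism by which one encodes ``we are now looking at the subset $P$'' without leaving the realm of finite languages. The size of this $Y$ should remain bounded by $n=\ell(X)$ (the new patterns we add have length at most that of the code-words of $X$, plus the single letter $c$, and we arrange the encoding so $\ell(Y)\le n$).

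Granting such a construction, here is how the argument proceeds. Since $X$ is complete but, after the restriction to the subset $P\subsetneq Q$ recorded by $c$, the resulting finite language $Y$ over $d+1$ letters is \emph{incomplete} (because $P$ is a proper non-empty subset — it misses at least one state unless $P=Q$, and the degenerate case $P=Q$, i.e. $v_1$ resetting nothing, is handled separately by taking $w_1=\epsilon$), Restivo's parameter applies: $Y$ has an incompletable word of length at most $R_{\mathcal F}(\ell(Y),d+1)\le R_{\mathcal F}(n,d+1)$. From that incompletable word I would extract, by stripping off the bookkeeping letter $c$, a word $w_1\in A^*$ of length at most $R_{\mathcal F}(n,d+1)$ with $Qw_1\subseteq Qv_1$. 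The symmetric construction — using the reversal of $\mathcal A_X$, or equivalently reading words right-to-left, which is legitimate because $\mathcal A_X$ is unambiguous and the ``dual'' language is again a finite language over $d+1$ letters of the same size — yields $w_2\in A^*$ with $|w_2|\le R_{\mathcal F}(n,d+1)$ and $Q{w_2}^{-1}\subseteq Qv_2^{-1}$. Combining the two gives the statement.

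The main obstacle, and the step deserving the most care, is the encoding: setting up the finite language $Y$ over $B=A\cup\{c\}$ so that (i) $\ell(Y)\le n$, (ii) $Y$ is incomplete exactly because it ``remembers'' the proper subset $P=Qv_1$, and (iii) an incompletable word of $Y$ really does project to a word $w_1$ with $Qw_1\subseteq Qv_1$ rather than to something weaker. The subtlety is that $\mathcal A_X$ need not be deterministic (it is only unambiguous), so ``$Qw_1\subseteq Qv_1$'' is a statement about sets of states reached along all paths, and the finite-language $Y$ must be designed so that incompletability in $Y^*$ forces the subset inclusion \emph{for every} branch, not just for one witnessing path. I expect this to be arranged by letting the code-words of $Y$ be exactly the code-words of $X$ together with a small number of length-$\le n$ words that splice in the letter $c$ at the spots corresponding to returns to the state $1$, with $c$ acting as a ``checkpoint'' that is passable only from states in $P$; verifying that this $Y$ is a finite language of size $\le n$ with the required incompletability property, and that the projection works, is the technical heart of the proof. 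The bound $R_{\mathcal F}(n,d+1)$ (rather than $R_{\mathcal F}(n,d)$) is precisely the price of the one extra checkpoint letter.
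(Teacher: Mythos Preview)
Your high-level plan matches the paper's: introduce one fresh letter, build an incomplete finite language $Y$ over $d+1$ letters with $\ell(Y)\le n$, invoke $R_{\cF}(n,d+1)$, and extract $w_1$ from a shortest incompletable word; then argue symmetrically for $w_2$. But the encoding you sketch does not work, and this is precisely the ``technical heart'' you flag. If $c$ is a checkpoint passable only from states in $P=Qv_1$ (say $\delta'(p,c)=\{1\}$ for $p\in P$ and $\emptyset$ otherwise), then $\delta'(Q,wc)=\emptyset$ detects $\delta(Q,w)\cap P=\emptyset$, which is the \emph{wrong} condition; you would need the checkpoint passable only from $Q\setminus P$ to detect $\delta(Q,w)\subseteq P$. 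More seriously, even after fixing the direction, there is a genuine tension between (i) $\ell(Y)\le n$ and (iii) the extraction step. If $c$ sends $Q\setminus P$ to $\{1\}$, then $\ell(Y)\le n$ holds, but after the first $c$ in a shortest incompletable word one is tracking $\delta'(1,\cdot)$ rather than $\delta'(Q,\cdot)$; the minimality argument collapses to a statement of the form $\delta(1,u)\subseteq P$, which does not yield $\delta(Q,u)\subseteq P$ in the unambiguous (non-deterministic) setting. If instead $c$ sends $Q\setminus P$ to all of $Q$ (which would make the extraction clean), the generating set $Y$ acquires arbitrarily long code-words, so $\ell(Y)\le n$ fails.

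The paper resolves this tension with a different device. Writing $v_1=ua$ with $a\in A$, the new letter $a'$ is defined to \emph{mimic} $a$, namely $\delta'(q,a')=\delta(q,a)\cup\{1\}$ if $q\notin Qu$ and $\delta'(q,a')=\delta(q,a)\setminus\{1\}$ if $q\in Qu$. Because $a'$ differs from $a$ only in transitions into the state $1$, every path of $\A'$ from $1$ to $1$ with no intermediate $1$ can be converted (by replacing $a'$ by $a$ and adjusting the last step) into a path of $\A_X$ of at least the same length, giving $\ell(Y)\le\ell(X)$. And for the extraction, one takes a shortest incompletable word $v=u_1a'u_2$ with $u_1\in A^*$ and with $|v|_{a'}$ minimal: if $\delta(Q,u_1)\not\subseteq\delta(Q,u)$ then $\delta'(Q,u_1a')\supseteq\delta'(Q,u_1a)$, so $u_1au_2$ is incompletable with one fewer $a'$, contradicting minimality; hence $\delta(Q,u_1)\subseteq\delta(Q,u)$, and $w_1=u_1a$ satisfies $Qw_1\subseteq Qv_1$ with $|w_1|\le|v|\le R_{\cF}(n,d+1)$. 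The point is that making the extra letter a perturbation of an existing letter, rather than a pure checkpoint, is exactly what lets both requirements hold simultaneously.
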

Indeed, assume that Lemma \ref{main-lm} holds.
As $X$ is complete, the word $w_1w_2$ has an $X$-completion $(r,s)$.
With no loss of generality, we may suppose that $|r|,|s|\leq n-1$.
Since $(v_1,v_2)$ is a synchronizing pair, in view of {Lemma} \ref{lm-synch}, one has
\[
  Q(rw_1)\cap Q(w_2s)^{-1}\subseteq Qw_1\cap Q{w_2}^{-1}\subseteq Qv_1\cap Q{v_2}^{-1}= \{1\}.
\]
Moreover, the word $rw_1w_2s\in X^*$ is accepted by $\A_X$ and therefore there is a state $q\in Q$ such that $q\in 1rw_1$ and $1\in qw_2s$.
Thus, $q\in Q(rw_1)\cap Q(w_2s)^{-1}\subseteq\{1\}$, that is, $q=1$.
This proves that $rw_1,w_2s\in X^*$ and by {Lemma} \ref{lm-synch} $(rw_1,w_2s)$ is a synchronizing pair of $X$.
Moreover $|rw_1w_2s| \leq 2R_{\cal F} (n, d+1) + 2n -2$.
By the arbitrary choice of the maximal synchronizing code $X$, one derives Proposition \ref{main}.

\medskip

Now, our main goal is to prove Lemma \ref{main-lm}.
For the sake of simplicity, we will prove the existence of the word $w_1$ that fulfills the conditions of Lemma \ref{main-lm} since the proof of the existence of the word $w_2$ can be obtained by using a symmetric construction.
The main tool of this proof is a new automaton we construct below.

Let $(v_1,v_2)$ be a synchronizing pair of $X$.
If $v_1=\epsilon$, the statement is trivially verified by $w_1=v_1$.
Thus we assume $v_1\neq\epsilon$ and set $v_1=ua,$ with $u\in A^*$ and $a\in A$.

Let $a'$ be a symbol not belonging to $A$ and let $A' = A \cup \{a'\}$.
We consider a new automaton ${\A}'=\langle Q,A',\delta',1\rangle$ where the transition map $\delta'$ is defined as follows: for every $q\in Q$ and $a\in A$, $\delta'(q,a)=\delta(q, a)$ and
\begin{equation}
  \label{main-lm-eq:delta'}
  \delta ' (q, a') =
  \begin{cases}
    \delta (q, a)\cup\{1\} & \mbox{if } q\notin\delta (Q, u),\\ \delta (q, a)\setminus\{1\} & \mbox{if } q\in\delta (Q, u).
  \end{cases}
\end{equation}
It is useful to remark that, for all $q\in Q$ and for any word $w\in A^*$, $\delta ' (q, w) = \delta (q, w)$.
It is also useful to remark that, by construction, the automaton ${\cal A}'$ is still transitive.
Let $Y$ be the minimal generating set of the language accepted by $\A'$.
Thus, $L_{{\A}'} = Y^*$ and $Y \cap Y^2 Y^*= \emptyset$.

Now we prove some combinatorial properties of the set $Y$.

\begin{lemma}
  \label{main-lm-1}
  The set $Y$ is incomplete.
\end{lemma}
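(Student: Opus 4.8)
The plan is to show that the automaton $\A'$ fails to be complete, so that by Lemma \ref{lm-XI}(3) its minimal generating set $Y$ is incomplete. Concretely, I would exhibit an explicit word $z\in (A')^*$ such that $\delta'(Q,z)=\emptyset$. The natural candidate is built from the synchronizing pair $(v_1,v_2)$ of $X$, together with the new letter $a'$. Recall $v_1=ua$ with $a\in A$. The key observation is that, by the definition \eqref{main-lm-eq:delta'} of $\delta'$, reading $a'$ from a state in $\delta(Q,u)$ behaves like reading $a$ but with the state $1$ deleted, so $\delta'(q,a')=\delta(q,a)\setminus\{1\}$ for $q\in\delta(Q,u)$. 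Hence $\delta'(\delta(Q,u),a')=\delta(\delta(Q,u),a)\setminus\{1\}=Q v_1\setminus\{1\}$. I would then apply the second word $v_2$ of the synchronizing pair: since $(v_1,v_2)$ is a synchronizing pair of $X$, Lemma \ref{lm-synch}(4) gives $Qv_1\cap Q v_2^{-1}=\{1\}$, so the only state of $Qv_1$ that survives being mapped into a nonempty set under $v_2$ is $1$ itself. Removing $1$ first and then reading $v_2$ should annihilate everything.

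In more detail, I would take $z = u\,a'\,v_2$ and compute $\delta'(Q,z)$ in three stages. First, $\delta'(Q,u)=\delta(Q,u)$ because $u\in A^*$ and $\delta'$ agrees with $\delta$ on words over $A$. Second, as noted above, $\delta'(\delta(Q,u),a')=\delta(\delta(Q,u),a)\setminus\{1\}=Qv_1\setminus\{1\}$. Third, reading $v_2\in A^*$ we again have $\delta'=\delta$, so $\delta'(Qv_1\setminus\{1\},v_2)=\delta(Qv_1\setminus\{1\},v_2)\subseteq\delta(Qv_1,v_2)$. Now for any state $q\in Qv_1$ with $q\neq 1$, we have $q\notin Q v_2^{-1}$ by the synchronization identity $Qv_1\cap Qv_2^{-1}=\{1\}$, which means (by the definition of $Q v_2^{-1}$) that $\delta(q,v_2)=\emptyset$. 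Therefore $\delta'(Q,z)=\bigcup_{q\in Qv_1,\,q\neq 1}\delta(q,v_2)=\emptyset$. Thus $z$ has no $X$-completion in the sense of Lemma \ref{lm-XI}(3) applied to $\A'$ and $Y$ — equivalently, $z$ is incompletable in $Y$ — so $Y$ is incomplete.

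I should address one edge case: if $u=\epsilon$, i.e. $v_1=a$ is a single letter, the argument is unchanged since then $\delta(Q,u)=Q$ and the computation goes through verbatim. (The case $v_1=\epsilon$ was already excluded before the construction of $\A'$.) The main point requiring care is the bookkeeping around the state $1$: I must be sure that the set on which $\delta'(\cdot,a')$ deletes $1$ is exactly $\delta(Q,u)$, which is guaranteed by the first branch/second branch split in \eqref{main-lm-eq:delta'}, and that $Q v_2^{-1}$ meeting $Qv_1$ only in $\{1\}$ is precisely the synchronization hypothesis as restated in Lemma \ref{lm-synch}(4). I do not expect a serious obstacle here; the lemma is essentially a direct unwinding of the definition of $\A'$ against the synchronization identity. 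The only subtlety worth a sentence is that $Y$ being incomplete is equivalent to $\A'$ being incomplete as an accepting automaton of $Y^*$, which is exactly Lemma \ref{lm-XI}(3), so the conclusion is immediate once $\delta'(Q,z)=\emptyset$ is established.
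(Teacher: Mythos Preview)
Your proposal is correct and follows essentially the same approach as the paper: both exhibit the word $z=ua'v_2$ and verify that $\delta'(Q,ua')=\delta(Q,v_1)\setminus\{1\}$, then use the synchronization identity $Qv_1\cap Qv_2^{-1}=\{1\}$ to conclude $\delta'(Q,z)=\emptyset$, whence $Y$ is incomplete by Lemma~\ref{lm-XI}. The paper phrases the last step as $\delta'(Q,ua')\cap\delta'(Q,v_2^{-1})=\emptyset$ rather than your pointwise computation, but the content is identical.
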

\begin{proof}
  By (\ref{main-lm-eq:delta'}) one has $\delta'(Q,ua')=\delta(Q,ua)\setminus\{1\}=\delta(Q,v_1)\setminus\{1\}$ and $\delta'(Q,v_2^{-1})=\delta(Q,v_2^{-1})$.
  Taking into account that $(v_1,v_2)$ is a synchronizing pair of $X$, one derives
  \[
    \delta'(Q,ua')\cap\delta'(Q,v_2^{-1})= \delta(Q,v_1)\cap\delta'(Q,v_2^{-1})\setminus\{1\}=\emptyset\,.
  \]
  It follows that $\delta'(Q,ua'v_2)=\emptyset$.
  This equation proves that the automaton $\A$ is not complete.
  Thus, by Lemma~\ref{lm-XI}, $Y$ is an incomplete set.
  \qed
\end{proof}

\begin{lemma}
  \label{main-lm-2}
  It holds that $\ell(Y) \leq \ell(X)$.
\end{lemma}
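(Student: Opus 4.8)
\noindent\emph{Plan.} The plan is to prove that $|y|\le n=\ell(X)$ for every $y\in Y$. The idea is that a word of $Y$ labels a path of $G(\mathcal{A}')$ from $1$ to $1$ with no intermediate occurrence of the state $1$, and that, once its last edge is dropped and every occurrence of $a'$ is read as $a$, the remaining edges become edges of $G(\mathcal{A}_X)$; it then suffices to bound by $n-1$ the length of a path of $G(\mathcal{A}_X)$ that leaves the state $1$ and never returns to it.

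\smallskip\noindent\emph{Step 1: a length bound in $\mathcal{A}_X$.} I would first record that, $X$ being a code, $\mathcal{A}_X$ is unambiguous, so the label of any path of $G(\mathcal{A}_X)$ from $1$ to $1$ with no intermediate occurrence of $1$ belongs to $X$, and hence has length at most $n$. Indeed, were such a label a product $x_1\cdots x_m$ of $m\ge 2$ words of $X$, then concatenating the $1\to 1$ paths labelled $x_1,\ldots,x_m$ would give a second $1\to 1$ path carrying the same label but meeting $1$ internally, contradicting unambiguity. Consequently, any path of $G(\mathcal{A}_X)$ from $1$ to a state $p\neq 1$ with no intermediate occurrence of $1$ has length at most $n-1$: prolonging it by a path from $p$ to $1$ (one exists, since $\mathcal{A}_X$ is transitive) truncated at its first visit to $1$ appends at least one edge and produces a $1\to 1$ path with no intermediate occurrence of $1$, of length at most $n$.

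\smallskip\noindent\emph{Step 2: reduction from $\mathcal{A}'$ to $\mathcal{A}_X$.} Let $y\in Y$ and $k=|y|$. Being indecomposable in the monoid $L_{\mathcal{A}'}=Y^*$, the word $y$ labels a path $1=p_0\to p_1\to\cdots\to p_k=1$ of $G(\mathcal{A}')$ with $p_1,\ldots,p_{k-1}\neq 1$ (otherwise, splitting at an intermediate occurrence of $1$ would write $y$ as a product of two nonempty words of $L_{\mathcal{A}'}$). If $k\le 1$ then $|y|\le 1\le\ell(X)$ and we are done; so assume $k\ge 2$, whence $p_{k-1}\neq 1$. Consider the first $k-1$ edges of the path. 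For $i\le k-1$, the edge $p_{i-1}\to p_i$ is labelled either by a letter of $A$, in which case it is already an edge of $G(\mathcal{A}_X)$, or by $a'$; in the latter case $p_i\neq 1$, and since $\delta'(q,a')\setminus\{1\}\subseteq\delta(q,a)$ by~(\ref{main-lm-eq:delta'}), we get $p_i\in\delta(p_{i-1},a)$, so $(p_{i-1},a,p_i)$ is an edge of $G(\mathcal{A}_X)$. Hence $G(\mathcal{A}_X)$ contains a path of length $k-1$ from $1$ to $p_{k-1}\neq 1$ with no intermediate occurrence of $1$; by Step 1, $k-1\le n-1$, that is, $|y|\le n$. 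As $y$ was arbitrary, this gives $\ell(Y)\le\ell(X)$.

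\smallskip\noindent\emph{Main obstacle.} The substantive point is Step 1 — in particular the appeal to unambiguity of $\mathcal{A}_X$ that excludes a nontrivial $X$-factorisation of the label of a $1\to 1$ path with no intermediate $1$, together with the use of transitivity to close an escaping path into such a $1\to 1$ path. Step 2 is then bookkeeping: the key observation is that the only edges of $G(\mathcal{A}')$ whose underlying step $G(\mathcal{A}_X)$ may fail to provide are those entering the state $1$, and such an edge can occur only as the last edge of a path of $G(\mathcal{A}')$ from $1$ to $1$ with no intermediate occurrence of $1$ — precisely the edge we discard.
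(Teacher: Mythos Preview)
Your proof is correct and follows essentially the same route as the paper's: drop the last edge of the $1\to 1$ path in $G(\mathcal{A}')$ labelling $y$, replace each $a'$-edge by the corresponding $a$-edge of $G(\mathcal{A}_X)$ (legitimate because the target is $\neq 1$), extend to a $1\to 1$ path in $G(\mathcal{A}_X)$ by transitivity, and invoke unambiguity to conclude its label lies in $X$. The only organisational difference is that the paper phrases the conclusion as exhibiting some $x\in X$ with $|x|\ge|y|$, whereas you first isolate the bound $n-1$ on paths in $G(\mathcal{A}_X)$ leaving $1$ without returning and then apply it; the content is the same.
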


\begin{proof}
  In order to prove the statement, it is enough to show that, for every $y\in Y,$ there exists $x\in X$ with $|y| \leq |x|$.
  
  Let $y=a_1\cdots a_k\in Y$, with $a_i \in A'$, for $i=1, \ldots, k$.
  Since $Y \cap Y^2 Y^*= \emptyset$, in the graph of $\A'$ there is a path
  \[
    c' = 1 \stackrel{a_1}{\longrightarrow} q_1 \stackrel{a_2}{\longrightarrow}q_2 \stackrel{a_3}{\longrightarrow} \cdots \stackrel{a_{k-1}}{\longrightarrow} q_k \stackrel{a_k}{{\longrightarrow} } 1,
  \]
  where, for every $i=1,\ldots, k,$ $q_i \neq 1$.
  Let us now construct a path $c$ in the graph of ${\cal A}_X$ such that $||c||=x\in X,$ with $|x| \geq |y|$, so completing the proof.
  
  By the definition of $\A'$, any edge $p\stackrel{b}{{\longrightarrow} } q$ of the graph of $\A'$ with $b\neq a'$ is also an edge of the graph of $\A$.
  Moreover, if $p\stackrel{a'}{{\longrightarrow} } q$ is an edge of the graph of $\A'$ with $q\neq 1$, then $p\stackrel{a}{{\longrightarrow} } q$ is an edge of the graph of $\A$.
  Thus, by replacing in $c'$, every transition $q_i \stackrel{a'}{{\longrightarrow} } q_{i+1}$, by $q_i \stackrel{a}{{\longrightarrow} } q_{i+1}$ and deleting the last edge $q_k \stackrel{a_k}{{\longrightarrow} } 1$, we find a path
  \[
    d = 1 \stackrel{b_1}{\longrightarrow} q_1 \stackrel{b_2}{\longrightarrow}q_2 \stackrel{b_3}{\longrightarrow} \cdots \stackrel{b_{k-1}}{\longrightarrow} q_k \stackrel{b_k}{{\longrightarrow} } 1,
  \]
  of the graph of ${\cal A}$.
  Since $\A$ is transitive, one can catenate $d$ with a simple path from $q_k$ to $1$.
  In such a way, we obtain a path $c$ of the graph of $\A$ starting and ending in $1$, with all intermediate states distinct from $1$ and length $\geq k+1$.
  As is well known, as $\A$ is unambiguous, the label $x$ of such a path is a word of the minimal generating set $X$ of $X^*$.
  Since $|x| \geq k+1=|y|$, this completes the proof.\qed
\end{proof}

\begin{lemma}
  \label{main-lm-3}
  Let $v$ be an incompletable word of $Y$ of minimal length.
  There exists a word $w_1\in A^*$ such that
  \[
    |w_1|\leq |v|, \quad Qw_1 \subseteq Qv_1.
  \]
\end{lemma}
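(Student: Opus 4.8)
The plan is to take an incompletable word $v$ of $Y$ of minimal length and extract from it a word $w_1$ over the original alphabet $A$ with the required properties. The key observation is that in $\mathcal{A}'$, a word $w$ over $A'$ has $\delta'(Q,w)=\emptyset$ if and only if $w$ is incompletable in $Y$ (by Lemma~\ref{lm-XI}, applied to the transitive incomplete automaton $\mathcal{A}'$; note $\mathcal{A}'$ is incomplete by Lemma~\ref{main-lm-1}). So $\delta'(Q,v)=\emptyset$, while every proper prefix $v'$ of $v$ satisfies $\delta'(Q,v')\neq\emptyset$ by minimality. First I would argue that $v$ must contain at least one occurrence of the new letter $a'$: indeed, for any word $w\in A^*$ we have $\delta'(Q,w)=\delta(Q,w)$, and since $X$ is complete, $\delta(Q,w)\neq\emptyset$; hence an all-$A$ word can never be incompletable in $Y$.

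Next I would look at the \emph{first} occurrence of $a'$ in $v$, writing $v=v'a'v''$ with $v'\in A^*$ (and $v''\in A'^*$). Consider the set $P=\delta'(Q,v')=\delta(Q,v')$, which is nonempty. The point of the construction (\ref{main-lm-eq:delta'}) is that reading $a'$ from a state $q$ behaves like reading $a$ except that $1$ is forcibly added to the image when $q\notin\delta(Q,u)$ and forcibly removed when $q\in\delta(Q,u)$. I want to relate $\delta'(P,a')$ to $\delta(P,a)$ and then to $\delta(Q,v_1)=\delta(Q,ua)$. The candidate word will be something like $w_1=v'a$; I would need to check that $|w_1|\le|v|$ (clear, since $v'a'$ is a prefix of $v$ and $|v'a|=|v'a'|\le|v|$) and, crucially, that $Qw_1=\delta(Q,v'a)\subseteq\delta(Q,ua)=Qv_1$.

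The main obstacle — and the heart of the argument — is establishing the inclusion $\delta(Q,v'a)\subseteq\delta(Q,v_1)$. The idea is that $a'$ can only "create" the state $1$ as a new destination, never any other state; so $\delta'(Q,v'a')\setminus\{1\}\subseteq\delta(Q,v'a)$, and conversely $\delta(Q,v'a)\subseteq\delta'(Q,v'a')\cup\{1\}$. Combined with the fact that $v'a'$ (being a proper prefix of $v$, if $v''\ne\epsilon$) has $\delta'(Q,v'a')\neq\emptyset$, one should be able to pin down which states survive. I expect that the key is a reachability/backward argument: since $\delta'(Q,v)=\emptyset$ and $v=v'a'v''$, no state in $\delta'(Q,v'a')$ can be brought back into $Q$ by reading $v''$ in $\mathcal{A}'$; one then transfers this to $\mathcal{A}$ using that $\delta'$ and $\delta$ agree on $A^*$, possibly after a further case analysis on whether $v''$ itself contains $a'$. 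If $v''$ does contain another $a'$, one iterates the argument on the relevant prefix; the cleanest route is probably to induct on the number of occurrences of $a'$ in $v$, or to argue directly that the \emph{last} state reached before the obstruction must lie in $\delta(Q,u)$, forcing the image after the final $a$ to sit inside $\delta(Q,ua)=Qv_1$. Pinning down this last inclusion rigorously — keeping careful track of when $1$ is added versus removed by $a'$ — is where the real work lies; the rest is bookkeeping on lengths and prefixes.
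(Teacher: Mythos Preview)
Your setup is exactly right and matches the paper: $v$ must contain $a'$, you write $v=v'a'v''$ with $v'\in A^*$ (the paper uses $u_1,u_2$ for your $v',v''$), and the candidate is $w_1=v'a$. The length bound $|w_1|\le|v|$ is immediate, and the whole problem reduces to showing $\delta(Q,v')\subseteq\delta(Q,u)$, which then gives $Qw_1=\delta(Q,v'a)\subseteq\delta(Q,ua)=Qv_1$.

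Where you stall is precisely where the paper's proof has its one clean idea, and your proposal does not yet contain it. You talk about reachability/backward arguments, iterating over later occurrences of $a'$, or inducting on $|v|_{a'}$, but you never isolate the \emph{swap} observation that makes all of this immediate. Here it is: if $\delta(Q,v')\not\subseteq\delta(Q,u)$, then some state of $\delta(Q,v')$ lies outside $\delta(Q,u)$, and by~(\ref{main-lm-eq:delta'}) that single state already forces $1\in\delta'(\delta(Q,v'),a')$; combined with the general inclusion $\delta'(q,a')\supseteq\delta(q,a)\setminus\{1\}$ one gets
\[
\delta'(Q,v'a')=\delta(Q,v'a)\cup\{1\}\supseteq\delta(Q,v'a)=\delta'(Q,v'a).
\]
Hence $\delta'(Q,v'av'')\subseteq\delta'(Q,v'a'v'')=\emptyset$, so $v'a v''$ is also incompletable, has the same length as $v$, and has strictly fewer occurrences of $a'$. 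The paper exploits this by adding a tiebreak to the choice of $v$: among all minimal-length incompletable words, take one with $|v|_{a'}$ minimal. The swap then gives an immediate contradiction, and $\delta(Q,v')\subseteq\delta(Q,u)$ follows. Your suggestion to ``induct on the number of occurrences of $a'$'' is an equivalent packaging of the same trick, but you need the swap inclusion above to drive the induction; without it the argument does not close.
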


\begin{proof}
  Let $v$ be an incompletable word of $Y$ of minimal length, with the number $|v|_{a'}$ as small as possible.
  Then, by {Lemma} \ref{lm-XI}, one has $\delta'(Q,v)=\emptyset$.
  
  The letter $a'$ necessarily occurs in $v$, since by the completeness of $\A$, one has $\delta'(Q,r)=\delta(Q,r)\neq\emptyset$ for all $r\in A^*$.
  Thus, we can write $v=u_1a'u_2$, with $u_1\in A^*$ and $u_2\in A'^*$.
  
  Recall that $v_1 = ua,$ with $u\in A^*$, $a\in A$.
  Let us verify that $\delta(Q,u_1)\subseteq\delta(Q,u)$.
  Indeed, suppose the contrary.
  Then, by (\ref{main-lm-eq:delta'}), one has
  \[
    \delta'(Q,u_1a')=\delta(Q,u_1a)\cup\{1\}=\delta'(Q,u_1a)\cup\{1\}
  \]
  and consequently, $\delta'(Q,u_1au_2)\subseteq\delta'(Q,u_1a'u_2)=\emptyset$.
  Thus, $u_1au_2$ is an incompletable word of $Y$, but this contradicts the minimality of $|v|_{a'}$.
  
  We conclude that $\delta(Q,u_1)\subseteq\delta(Q,u)$ and therefore taking $w_1=u_1a$ and recalling that $v_1=ua$, one has $\delta(Q,w_1)\subseteq\delta(Q,v_1)$ and $|w_1| \leq |v|$.
  The statement follows.\qed
\end{proof}

Let us finally remark that Lemma \ref{main-lm-2} and Lemma \ref{main-lm-3} yield
\[
  |w_1| \leq R_{\cal F} (n, d+1),\quad Qw_1\subseteq Qv_1.
\]
The proof of Lemma \ref{main-lm} is thus complete.

If we restrict ourselves to prefix codes, we obtain a tighter bound.

\begin{proposition}
  \label{prefisso}
  Let ${\cal MP}$ be the class of complete finite prefix codes.
  For all $n,d>0$,
  \[
    C_{\cal MP}(n,d)\leq R_{\cal F}(n,d+1)\,.
  \]
\end{proposition}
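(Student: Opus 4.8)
The plan is to run the proof of Proposition~\ref{main} almost verbatim, but to exploit that, for a complete \emph{prefix} code $X$, the automaton $\mathcal A_X$ is complete and deterministic. This will let me dispense with the word $w_2$ and, more importantly, with the $X$-completion step for $w_1w_2$ which produced the additive term $2n-2$; what is left is exactly the bound $R_{\cal F}(n,d+1)$ coming from Lemma~\ref{main-lm}.

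So I would take a complete synchronizing prefix code $X$ over a $d$-letter alphabet $A$, set $n=\ell(X)$, and let $\mathcal A_X=\langle Q,A,\delta,1\rangle$ be the automaton accepting $X^*$; by Lemma~\ref{lm-X} it is deterministic, by Lemma~\ref{lm-XI} it is complete, and it is transitive as usual. The key remark is that the transition function of $\mathcal A_X$ is total, so $Qv^{-1}=Q$ for every word $v$. Hence, if $(v_1,v_2)$ is a synchronizing pair of $X$, Lemma~\ref{lm-synch} gives $\{1\}=Qv_1\cap Qv_2^{-1}=Qv_1\cap Q=Qv_1$; in other words, in the prefix case a synchronizing pair already carries a word $v_1\in X^*$ with $Qv_1=\{1\}$, and its second component is irrelevant.

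Next I would apply Lemma~\ref{main-lm} to the synchronizing pair $(v_1,v_2)$. It yields a word $w_1\in A^*$ with $|w_1|\le R_{\cal F}(n,d+1)$ and $Qw_1\subseteq Qv_1=\{1\}$. Since $\mathcal A_X$ is complete, $Qw_1\neq\emptyset$ by Lemma~\ref{lm-XI}, so in fact $Qw_1=\{1\}$. From this, $\delta(1,w_1)=1$, whence $w_1\in L(\mathcal A_X)=X^*$, and also $Qw_1\cap Q\epsilon^{-1}=\{1\}\cap Q=\{1\}$; so Lemma~\ref{lm-synch} tells me that $(w_1,\epsilon)$ is a synchronizing pair of $X$, of total length $|w_1\epsilon|=|w_1|\le R_{\cal F}(n,d+1)$. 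As $X$ was an arbitrary complete synchronizing prefix code on $d$ letters with $\ell(X)\le n$, this gives $C_{\cal MP}(n,d)\le R_{\cal F}(n,d+1)$.

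I do not expect a genuine obstacle here, since everything rests on Lemma~\ref{main-lm}, which is already established; the only subtlety worth a word is that one cannot sharpen $R_{\cal F}$ to the analogous parameter for prefix codes, because the auxiliary automaton $\mathcal A'$ built inside the proof of Lemma~\ref{main-lm} is in general \emph{not} deterministic (the set $\delta'(q,a')$ may contain two states), so the incomplete set $Y$ produced there need not be prefix. The whole improvement over Proposition~\ref{main} is therefore a bookkeeping gain: choosing the synchronizing pair so that $v_2=\epsilon$ (equivalently, discarding $v_2$) and using completeness of $\mathcal A_X$ to upgrade $Qw_1\subseteq\{1\}$ to $Qw_1=\{1\}$ removes both $w_2$ and the $X$-completion step.
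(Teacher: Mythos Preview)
Your proof is correct and follows essentially the same route as the paper's: apply Lemma~\ref{main-lm} to obtain $w_1$ with $Qw_1\subseteq Qv_1=\{1\}$, use completeness to get $Qw_1=\{1\}$, and conclude that $(w_1,\epsilon)$ is a synchronizing pair of length at most $R_{\cal F}(n,d+1)$. The only cosmetic difference is that the paper directly picks a synchronizing pair with $v_2=\epsilon$, whereas you deduce $Qv_1=\{1\}$ from the observation that $Qv_2^{-1}=Q$ for every $v_2$ in a complete deterministic automaton; the two arguments are equivalent.
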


\begin{proof}
  Let $X$ be a maximal prefix code.  Then, $X$ is accepted by a complete deterministic automaton $\mathcal A_X$.
  Moreover, $X$ has a synchronizing pair $(v_1,v_2)$ with $v_2=\epsilon$.
  Thus, $Qv_1=Qv_1\cap Qv_2^{-1}=\{1\}$.
  By Lemma \ref{main-lm}, there is a word $w_1\in A^*$ such that
  \[
    |w_1|\leq R_{\cal F}(n,d+1)\,,\quad Qw_1=\{1\}\,.
  \]
  This implies that $w_1\in X^*$ and $(w_1,\epsilon)$ is a synchronizing pair of the prefix code $X$.
  This proves the statement.\qed
\end{proof}

\begin{example}
  \label{ex:det}
  Consider the prefix code
  \[
    X=\{a,baaa,baab,bab,bb\}\,.
  \]
  The automata $\A_X$ and $\A'$ are represented in Figure \ref{figA}.
  One obtains

  \begin{figure}
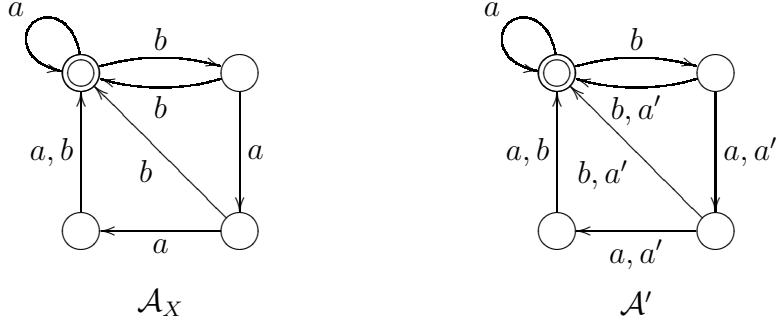

    \label{figA}
    \[
      \def\automamore{\POS(.5,-.45)*{\A_X}\POS(3.5,-.45)*{\A'}} \automa[0,1;1,1;1,0;0,0;3,1;4,1;4,0;3,0]%
      [0,0,a,ul;0,1,b,2;1,0,b,2;1,2,a;2,3,a;2,0,b;3,0,{a,b};%
      4,4,a,ul;4,5,b,2;5,4,{b,a'},2;5,6,{a,a'};6,7,{a,a'};6,4,{b,a'};7,4,{a,b}][][0;4]
    \]
    \caption{Automata of Example \ref{ex:det}}
  \end{figure}
  \begin{multline*}
    Y=\{a,ba',bb,baa',bab,ba'a',ba'b,baaa,baab,\\ baa'a,baa'b,ba'aa,ba'ab,ba'a'a,ba'a'b\},
  \end{multline*}
  so that $\ell(Y)=\ell(X)=4$.
  The word $aaa'$ is $Y$-incompletable and, consequently, $(aaa,\epsilon)$ is a synchronizing pair of the code $X$.
\end{example}

\section{Reduction to the binary case}
\label{quattro}
\label{sec:red}
The aim of this section is to study how much the parameters $R_{\cal L}(n,d)$ and $C_{\cal L}(n,d)$ vary according to the number $d$ of letters of the alphabet.
We start to analyze the parameter $R_{\cal L}(n,d)$.
In the sequel, $B$ denotes the binary alphabet $B=\{a, b\}.$ The following lemma will be useful in the sequel.
It gives an interesting insight on the structure of the completions of words in a complete regular set.
As far as we know, it seems to be a new result.
\begin{lemma}
  \label{sec:red-comp-0}
  Let $Y\subseteq A^*$ be a complete regular set.
  Then any word $w$ of $A^*$ has a $Y$-completion $(y,s)$ with $y\in Y^*$.
\end{lemma}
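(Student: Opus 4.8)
The plan is to work inside the minimal deterministic automaton $\mathcal D=\langle Q,A,\delta,1\rangle$ recognizing $Y^{*}$ (which is regular since $Y$ is), with set of final states $F$; since $\epsilon\in Y^{*}$ we have $1\in F$, and if $Y^{*}\neq A^{*}$ there is a unique sink state $\bot$ (the empty residual), which is absorbing and is the only state of $\mathcal D$ from which $F$ cannot be reached. The case $Y^{*}=A^{*}$ is trivial (every $w$ already lies in $Y^{*}$), so assume $\bot$ exists. First I would record a reduction: a word $w$ admits a $Y$-completion $(y,s)$ with $y\in Y^{*}$ if and only if there is a final state $f\in F$ with $\delta(f,w)\neq\bot$. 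Indeed, every $f\in F$ has the form $\delta(1,y)$ for some $y\in Y^{*}$ (the state $f$ is reached from $1$ by some word $y$, and $f\in F$ forces $y\in Y^{*}$); conversely, if $\delta(f,w)\neq\bot$ then, $\delta(f,w)$ being co-accessible in the minimal automaton, some $s$ gives $\delta(f,ws)\in F$, whence $yws\in Y^{*}$.

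Next I would introduce the set $C=\{q\in Q\setminus\{\bot\}\mid \bot\text{ is not reachable from }q\text{ in }\mathcal D\}$. It is immediate that $C$ is closed under transitions, so $\delta(q,u)\in C$ for every $q\in C$ and every $u\in A^{*}$. Assuming for a moment that $C\neq\emptyset$, pick $q\in C$; since $q$ is co-accessible, $\delta(q,v)\in F$ for some $v$, and $\delta(q,v)\in C$, so $C$ contains a final state $f$. For this $f$ and an arbitrary $w$ we get $\delta(f,w)\in C\subseteq Q\setminus\{\bot\}$, and by the reduction above this yields the desired completion. Thus the whole statement reduces to proving $C\neq\emptyset$.

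This last point is where completeness of $Y$ is used, and it is the main obstacle. I would argue by contraposition: if $C=\emptyset$, then $\bot$ is reachable from every state of $\mathcal D$. Since $\bot$ is absorbing and reachable from each of the finitely many states, a routine collecting argument produces a single word $w$ with $\delta(q,w)=\bot$ for all $q\in Q$ — treat the states one at a time, appending after the current word a word that drives the image of the next state into $\bot$, the absorbing property of $\bot$ ensuring that the states already collected are not disturbed. For such a $w$ no state is mapped by $w$ to a co-accessible state, so $\alpha w\beta\notin Y^{*}$ for all $\alpha,\beta\in A^{*}$, i.e.\ $w\notin\fact(Y^{*})$, contradicting the completeness of $Y$. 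Hence $C\neq\emptyset$, and by the previous paragraph the proof is complete.

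As a closing remark I would point out that the argument in fact locates the left factor $y$ inside any bottom strongly connected component of $\mathcal D$ that meets $F$ — such a component exists precisely because $Y$ is complete — which is the structural insight on completions that the lemma records.
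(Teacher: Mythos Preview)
Your argument is correct up to one small oversight: the implication ``$Y^{*}\neq A^{*}$ implies $\bot$ exists'' is false (take $Y=\{aa\}$ over a one-letter alphabet, where the minimal automaton of $Y^{*}$ has two states and no sink). The missing case where $\bot$ does not exist is, however, equally trivial: then every state of $\mathcal D$ is co-accessible, so $ws\in Y^{*}$ for some $s$ and $(\epsilon,s)$ is already the desired completion. With this patch your proof is complete.

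Your route is genuinely different from the paper's. The paper argues by an iterated-completion construction: it builds $Y$-completions $(u_k,v_k)$ of the successively longer words $wv_0wv_1\cdots wv_{k-1}w$, invokes the finite index of the syntactic congruence of $Y^{*}$ to find indices $h<k$ with $u_h\equiv u_k$, and then observes that $y=u_kwv_0\cdots wv_h$ is congruent to the element $u_hwv_0\cdots wv_h\in Y^{*}$ and hence itself lies in $Y^{*}$. Your proof is instead a direct structural analysis of the minimal automaton: completeness forces the existence of a state (indeed a final state) from which the sink is unreachable, and any word in $Y^{*}$ reaching such a state serves as the left factor $y$. Your approach makes the underlying reason more visible and in fact yields the uniform statement that a single $y\in Y^{*}$ works for \emph{all} $w$ simultaneously; the paper's pigeonhole argument avoids any case distinction on the sink and stays closer to the syntactic viewpoint, but produces a $y$ that depends on $w$.
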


\begin{proof}
  We define an infinite sequence $((u_n,v_n))_{n\geq 0}$ as follows: $(u_0,v_0)$ is a $Y$-completion of $w$; for all $n>0$, $(u_n,v_n)$ is a $Y$-completion of the word
  \[
    wv_0wv_1\cdots wv_{n-1}w.
  \]
  By Myhill-Nerode Theorem (see, \emph{e.g.}, \cite{BP-book}), $Y^*$ is union of congruence classes of a congruence of finite index $\equiv$\,.
  Thus, one has $u_h\equiv u_k$ for some $h,k$ with $k>h\geq 0$.
  By construction,
  \[
    x=u_kwv_0wv_1\cdots wv_k\in Y^*\quad\mbox{and}\quad z=u_hwv_0wv_1\cdots wv_h\in Y^*.
  \]
  One can write $x=yws$, with $y=u_kwv_0wv_1\cdots wv_h$ and $s=v_{h+1}wv_{h+2}\cdots wv_k$, so that $(y,s)$ is a $Y$-completion of $w$.
  Moreover, one has $y\equiv z$ and, consequently, $y\in Y^*$.
  This concludes the proof.\qed
\end{proof}

\begin{lemma}
  \label{sec:red-comp-1}
  Let $h:A^*\rightarrow B^*$ be a prefix encoding and $Y\subseteq A^*$.
  The set $h(Y)$ is complete if and only if $Y$ and $h(A)$ are complete.
\end{lemma}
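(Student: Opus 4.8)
Let $h:A^*\rightarrow B^*$ be a prefix encoding and $Y\subseteq A^*$. The set $h(Y)$ is complete if and only if $Y$ and $h(A)$ are complete.

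The plan is to prove the two implications separately, relying on a few structural facts: $h$ is injective, and being a morphism it satisfies $h(W)^*=h(W^*)$ for every $W\subseteq A^*$; in particular $h(Y)^*=h(Y^*)\subseteq h(A^*)=h(A)^*$. I will also use the standard fact that, since $h(A)$ is a prefix code, the submonoid $h(A)^*$ is right unitary: if $u,uv\in h(A)^*$ then $v\in h(A)^*$ (the first codeword of $uv$ is forced to coincide with that of $u$, and one concludes by induction on $|u|$).

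The implication ``$Y$ and $h(A)$ complete $\Rightarrow h(Y)$ complete'' is direct: given $w\in B^*$, completeness of $h(A)$ provides $z\in A^*$ with $w\in\fact(h(z))$, and completeness of $Y$ provides $\alpha,\beta\in A^*$ with $\alpha z\beta\in Y^*$; then $h(\alpha z\beta)=h(\alpha)h(z)h(\beta)\in h(Y)^*$ has $h(z)$, hence $w$, as a factor. In the converse, ``$h(Y)$ complete $\Rightarrow h(A)$ complete'' is also immediate, since every $w\in B^*$ lies in $\fact(h(Y)^*)=\fact(h(Y^*))\subseteq\fact(h(A)^*)$.

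The substantial point is to derive that $Y$ is complete from completeness of $h(Y)$. The obstacle is that if $(p,q)$ is an $h(Y)$-completion of $h(z)$, say $p\,h(z)\,q=h(y)$ with $y\in Y^*$, the occurrence of $h(z)$ inside $h(y)$ need not be aligned with the decomposition of $h(y)$ into codewords of $h(A)$, so one cannot decode it directly. To repair the alignment on the left I would apply Lemma~\ref{sec:red-comp-0} to the complete set $h(Y)$: it yields an $h(Y)$-completion $(\tilde y,s)$ of $h(z)$ with $\tilde y\in h(Y)^*=h(Y^*)$, so $\tilde y=h(\alpha)$ for some $\alpha\in Y^*$ and $h(\alpha z)\,s=h(\alpha)h(z)s\in h(Y^*)$, say $h(\alpha z)\,s=h(y)$ with $y\in Y^*$. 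Now $h(\alpha z)\in h(A)^*$ is a prefix of $h(y)\in h(A)^*$, so right unitarity of $h(A)^*$ forces $s\in h(A)^*$, say $s=h(\beta)$; then $h(\alpha z\beta)=h(\alpha z)s=h(y)$ and injectivity of $h$ give $\alpha z\beta=y\in Y^*$, whence $z\in\fact(Y^*)$. As $z\in A^*$ was arbitrary, $Y$ is complete.

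The step I expect to be the crux is exactly the appeal to Lemma~\ref{sec:red-comp-0}, which replaces an arbitrary completion of $h(z)$ by one whose left part already lies in $h(Y^*)$ and is therefore codeword-aligned; the rest is bookkeeping with the morphism identities, the injectivity of $h$, and the right-unitarity of $h(A)^*$. (Note that Lemma~\ref{sec:red-comp-0} requires $h(Y)$ to be regular, which holds whenever $Y$ is, covering the relevant cases.)
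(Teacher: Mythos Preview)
Your argument is correct and follows essentially the same route as the paper: the forward direction and the completeness of $h(A)$ are handled identically, and for the completeness of $Y$ both you and the paper invoke Lemma~\ref{sec:red-comp-0} to obtain an $h(Y)$-completion whose left part lies in $h(Y^*)$, then use the prefix property of $h(A)$ to decode. Your parenthetical remark that this step tacitly requires $h(Y)$ to be regular is well taken; the paper's statement of the lemma omits this hypothesis even though its proof relies on it.
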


\begin{proof}
  $(\Leftarrow)$ Let $w\in B^*$.
  Since $h(A)$ is complete, one has $rws=h(u)\in h(A^*)$, for some $r,s\in B^*$ and $u\in A^*$.
  Since $Y$ is complete, one has $puq\in Y^*$, where $p,q\in A^*$, thus yielding $h(puq)=h(p) rws h(q)\in h(Y^*)$.
  Hence $(h(p)r, s h(q))$ is a $h(Y)$-completion of $w$.
  
  $(\Rightarrow)$ The fact that $h(A)$ is complete follows straightforwardly from the inclusion $B^* \subseteq \fact (h(Y^*)) \subseteq \fact (h(A^*))$.
  
  Let us prove that $Y$ is complete.
  Let $w\in A^*$.
  Since $h(Y)$ is complete, by Lemma \ref{sec:red-comp-0}, one has $h(u) h(w) s= h(v),$ for some $u,v\in Y^*$ and $s\in B^*$.
  Since $h$ is a prefix encoding, one has $v = uwr,$ for some $r\in A^*$.
  The latter implies that $(u, r)$ is a $Y$-completion of $w$.
  \qed
\end{proof}

By encoding a $d$-letter alphabet on a suitable complete binary prefix code one obtains
\begin{proposition}
  \label{sec:red-comp-2}
  Let ${\cal L}$ be the class of finite languages (resp., codes).
  Then
  \begin{equation}
    \label{sec:red-synch-eq:stat-lm-first}
    R_{\cal L}(n,d)\leq \left\lceil \frac {R_{\cal L}(\lceil\log_2 d\rceil n,2)}{\lfloor\log_2 d\rfloor} \right\rceil.
  \end{equation}
\end{proposition}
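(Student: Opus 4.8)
The idea is to encode a $d$-letter alphabet $A$ into a binary complete prefix code, apply the binary bound, and pull the incompletable word back. Concretely, set $k = \lceil \log_2 d\rceil$, so that $d \leq 2^k$, and choose an injective map $\iota: A \to B^k$ (a block code of length $k$); this gives a prefix encoding $h: A^* \to B^*$ with $h(A) = \iota(A)$ a prefix code of constant word-length $k$. Note $h(A)$ is complete precisely when $d = 2^k$; in general we only have $|\iota(A)| = d \leq 2^k$, so $h(A)$ need not be complete. This is the first thing to handle — see the obstacle paragraph below.

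\emph{Main steps.} Given a finite incomplete $X \in \mathcal L$ on the $d$-letter alphabet $A$ with $\ell(X) \leq n$, consider $Y = h(X) \subseteq B^*$. Since $h$ is a monomorphism, $Y$ is a code whenever $X$ is, so $Y$ lies in the binary analogue of $\mathcal L$; moreover $\ell(Y) \leq k\,\ell(X) \leq kn = \lceil \log_2 d\rceil n$. By Lemma~\ref{sec:red-comp-1}, if $h(A)$ is complete then $Y = h(X)$ is incomplete (because $X$ is incomplete). Assuming this, $Y$ has an incompletable word $z \in B^*$ with $|z| \leq R_{\cal L}(\lceil\log_2 d\rceil n, 2)$. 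Now I want to produce from $z$ an incompletable word of $X$ over $A$: take the prefix $w \in A^*$ of maximal length such that $h(w)$ is a prefix of $z$; then $|z| \geq k|w|$, i.e. $|w| \leq |z|/k \leq |z|/\lfloor\log_2 d\rfloor$. The claim is that $w$ is incompletable in $X$: if $pwq \in X^*$ for some $p,q \in A^*$, then $h(p)h(w)h(q) \in h(X^*) = Y^*$, and since $h(w)$ has $z$ as... — more carefully, one argues that a completion of $w$ in $X$ yields a completion of $z$ in $Y$ by padding with letters of $B$ inside the "unfinished block", using that $h(A)$ is complete to finish off the partial block; this contradicts incompletability of $z$. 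Rounding up gives $|w| \leq \lceil |z|/\lfloor\log_2 d\rfloor \rceil \leq \lceil R_{\cal L}(\lceil\log_2 d\rceil n,2)/\lfloor\log_2 d\rfloor\rceil$, which is the desired bound on $R_{\cal L}(n,d)$.

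\emph{The main obstacle.} The subtlety is that $h(A)$ is a \emph{complete} prefix code only when $d$ is a power of $2$; the appearance of both $\lceil\log_2 d\rceil$ (in the numerator) and $\lfloor\log_2 d\rfloor$ (in the denominator) in the statement signals that the authors juggle two different encodings or two different block lengths to handle general $d$. The natural fix: for the \emph{numerator} one wants an encoding into blocks of length $k = \lceil\log_2 d\rceil$ so that distinct letters of $A$ get distinct images and $h(A)$ can be extended to a complete binary prefix code (the unused leaves $2^k - d$ of them are filled by extra code-words, so $h(A) \cup (\text{fillers})$ is complete); but then $h(A)$ itself is only a \emph{subset} of a complete code, and to invoke Lemma~\ref{sec:red-comp-1} one must instead work with the complete code $h(A)\cup F$ and check that completeness/incompleteness still transfers to $h(X)$. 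For the \emph{denominator}, the gain $1/\lfloor\log_2 d\rfloor$ (rather than $1/\lceil\log_2 d\rceil$) must come from a cleverer, non-uniform encoding: one can encode $A$ using a complete binary prefix code all of whose code-words have length at least $\lfloor\log_2 d\rfloor$ (possible by Kraft–McMillan: take some words of length $\lfloor\log_2 d\rfloor$ and some of length $\lfloor\log_2 d\rfloor + 1$ summing to $1$, with at least $d$ code-words). With such an $h$, $h(A)$ is complete, so Lemma~\ref{sec:red-comp-1} applies directly, $\ell(h(X)) \leq \lceil\log_2 d\rceil\,\ell(X)$ still holds (each code-word has length $\leq \lceil\log_2 d\rceil$), and the pullback step gives $|w| \leq |z|/\lfloor\log_2 d\rfloor$ since every code-word of $h(A)$ has length $\geq \lfloor\log_2 d\rfloor$. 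Reconciling the two bounds and checking that the code (resp. prefix-code) property is preserved under $h$ — which it is, since $h$ is a monomorphism and $h(A)$ is a prefix code — is the technical heart; everything else is routine.
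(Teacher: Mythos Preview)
Your overall strategy is exactly the paper's: encode $A$ by a complete binary prefix code whose code-word lengths lie in $\{\lfloor\log_2 d\rfloor,\lceil\log_2 d\rceil\}$ (Kraft--McMillan makes this possible with exactly $d$ code-words), so that Lemma~\ref{sec:red-comp-1} applies directly, $\ell(h(X))\leq\lceil\log_2 d\rceil\, n$, and the minimum code-word length $\lfloor\log_2 d\rfloor$ governs the pullback. Your ``obstacle'' paragraph lands on precisely this single non-uniform encoding; there is no need for two separate encodings.

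There is, however, a real gap in your pullback step. You take the maximal $w\in A^*$ with $h(w)$ a \emph{prefix} of the binary incompletable word $z$, and then claim that a completion $pwq\in X^*$ yields a completion of $z$ in $h(X)$ ``by padding''. But $z=h(w)t$ with $t$ a (possibly nonempty) proper prefix of some code-word, and from $h(p)h(w)h(q)\in h(X)^*$ you cannot conclude that $z=h(w)t$ occurs as a factor: nothing forces $h(q)$ to begin with $t$, and you cannot choose $q$. Indeed $w$ can be $X$-completable while $z$ is $h(X)$-incompletable (in degenerate cases $w$ is even empty).

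The fix is to reverse the direction of the pullback, which is what the paper does. Since $h(A)$ is a complete prefix code, $z$ is a prefix of some $h(u)$; take $u\in A^*$ minimal with this property and write $h(u)=zs$. Now if $u$ were $X$-completable, say $rus'\in X^*$, then $h(r)\,z\,s\,h(s')=h(rus')\in h(X)^*$, so $z$ would be $h(X)$-completable --- contradiction. For the length bound, minimality gives $u=u'a$ with $|h(u')|<|z|$, and since $|h(u')|\geq\lfloor\log_2 d\rfloor\,|u'|$ one obtains $|u|=|u'|+1\leq\lceil |z|/\lfloor\log_2 d\rfloor\rceil$, which is the required estimate. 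Everything else in your outline is correct.
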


\begin{proof}
   Let $A$ be a $d$-letter alphabet and let $X$ be a finite incompletable language over $A$ of size $n$.
  Set $m = \lceil\log_2 d\rceil$, $\gamma = 2^{m+1}-d$ and let %
  $k_1, \ldots, k_d$ be the positive integers defined by
  \begin{equation}
    \label{eq:lunghezze}
    k_i=
    \begin{cases}
      m &\mbox{if } i\leq\gamma\,,\\ m+1 &\mbox{if } \gamma<i\leq d\,.
    \end{cases}
  \end{equation}
  One easily checks that 
  \begin{equation}
    \label{eq:completo}
    \sum_{i=1}^dk_i=1.
  \end{equation}
  Thus, by Kraft-McMillan Theorem, $k_1,\ldots,k_d$ are the code-word lengths of a synchronizing prefix code $Y$ over a binary alphabet $B$.
  Moreover, (\ref{eq:completo}) ensures that $Y$ is maximal and, consequently, complete.
  
  Now, let $h:A^*\rightarrow B^*$ be a monomorphism such that $h(A)=Y$.
  Then, for every $a\in A$, we have
  \begin{equation}
    \label{sec:red-comp-eq:1}
    \lfloor \log_2 d\rfloor \leq|h(a)|\leq \lceil \log_2 d \rceil.
  \end{equation}
  By (\ref{sec:red-comp-eq:1}) the size of $h(X)$ is not greater than $n\lceil\log_2 d \rceil$.
  By Lemma \ref{sec:red-comp-1}, since $X$ is incompletable, $h(X)$ is incompletable as well.
  Let $v$ be an incompletable word in $h(X)$ of minimal length.
  Hence we have
  \begin{equation}
    \label{sec:red-comp-eq:1(bis)}
    |v| \leq R_{\cal L}(\lceil\log_2 d\rceil n,2).
  \end{equation}
  Since $Y=h(A)$ is a complete prefix code, the word $v$ is a prefix of a word of $Y^*$.
  Thus, $vs=h(u)$ for some $u\in A^*$ and $s\in B^*$.
   %
  %
  Moreover, taking $u$ of minimal length, one may assume that $u=u'a,$ with $u'\in A^*, a\in A$, and $|h(u')| < |v|$.
  In view of (\ref{sec:red-comp-eq:1}), one derives
  \begin{equation}
    \label{sec:red-comp-eq:}
    |u| \leq \left \lceil \frac{|v| }{\lfloor \log_2 d \rfloor} \right \rceil.
  \end{equation}
  Let us check that $u$ is incompletable in $X$.
  By contradiction, deny.
  Then $r'us'\in X^*$, for some $r',s'\in A^*$.
  Consequently, $h(r'us')=h(r') vs h(s') \in h (X^*)$, thus implying that $v$ is completable in $h (X)$.
  
  Now (\ref{sec:red-synch-eq:stat-lm-first}) easily follows from the latter, (\ref{sec:red-comp-eq:1(bis)}) and (\ref{sec:red-comp-eq:}).
  \qed
\end{proof}

Let us now analyze the parameter $C_{\cal L}(n,d)$.
The following lemma is useful for this purpose.
It is algebraically similar to {Lemma} \ref{sec:red-comp-1}.

\begin{lemma}
  \label{sec:red-synch-1}
  Let $h:A^*\rightarrow B^*$ be a monomorphism and let $Y\subseteq A^*$ be a complete set.
  The set $h(Y)$ is synchronizing if and only if $Y$ and $h(A)$ are synchronizing.
\end{lemma}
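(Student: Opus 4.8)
The plan is to mimic the structure of the proof of Lemma \ref{sec:red-comp-1}, but working with synchronizing pairs (equivalently, with constants, via Lemma \ref{lemmaConstant}) instead of completions. The key translation device is: a word $c \in Y^*$ is a constant of $Y$ if and only if $h(c)$ is a constant of $h(Y)$, provided $h$ is a monomorphism and $Y$ is complete; similarly a word $c' \in h(A)^*$ is a constant of $h(A)$ iff it is a constant of $h(A^*)$ read inside $B^*$. Once this correspondence is in place, the biconditional follows by combining constants on the two "levels" — the $A$-level (via $Y$) and the letter-encoding level (via $h(A)$).

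For the direction $(\Leftarrow)$: suppose $Y$ has a synchronizing pair, hence a constant $c \in Y^*$ (Lemma \ref{lemmaConstant}), and $h(A)$ has a constant $c' \in h(A)^*$, say $c' = h(z)$ with $z \in A^*$. I would claim that $h(c)\,c'$, or some concatenation built from $h(c)$ and $c'$, is a constant of $h(Y)$. Concretely: take any factorizations $u_1\, h(cz)\, u_2,\ u_3\, h(cz)\, u_4 \in h(Y)^*$ with $u_i \in B^*$. Using that $c'=h(z)$ is a constant of $h(A^*)$, the occurrences of $c'$ force the surrounding material to be "aligned" with the code $h(A)$, so that $u_1 = h(p_1)r_1$, etc., with the $B^*$-parts $r_i$ absorbed into $c'$; this recovers $A^*$-factorizations $p_1\, c z\, q_2$ and $p_3\, cz\, q_4$ in $Y^*$. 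Then the constant property of $c$ (hence of $cz$, since a constant remains a constant after right-multiplication by an element of $X^*$ — a fact one checks directly) gives the swapped memberships in $Y^*$, and applying $h$ transports them back to $h(Y)^*$. Thus $h(cz)$ is a constant of $h(Y)$, and $h(Y)$ is synchronizing.

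For the direction $(\Rightarrow)$: suppose $h(Y)$ is synchronizing, with constant $C \in h(Y)^*$. First, $h(A)$ is synchronizing: $C$ is also a constant of $h(A^*)$ (since $h(Y^*) \subseteq h(A^*)$ and any $h(A^*)$-factorization restricts appropriately — here completeness of $Y$ and Lemma \ref{sec:red-comp-0} let us surround arbitrary words by elements of $Y^*$), and a constant of $h(A^*)$ yields, by the standard argument, that $h(A)$ is a synchronizing code. Second, $Y$ is synchronizing: since $C \in h(Y^*)$ and $h$ is a monomorphism, write $C = h(c)$ with $c \in Y^*$; I would show $c$ is a constant of $Y$ by taking $A^*$-factorizations $u_1 c u_2,\ u_3 c u_4 \in Y^*$, applying $h$ to get $h(u_1)\,C\,h(u_2),\ h(u_3)\,C\,h(u_4) \in h(Y^*)$, using that $C$ is a constant of $h(Y)$ to swap, and finally using injectivity of $h$ to pull the swapped words back to $Y^*$.

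The main obstacle I anticipate is the alignment argument in $(\Leftarrow)$: showing that an occurrence of the $h(A)$-constant $c'$ inside a word of $h(Y)^*$ actually forces a genuine $h(A)$-parsing in a neighbourhood of that occurrence, so that one can "cut" at the right places and recover honest $A^*$-factorizations lifting to $Y^*$. This is where completeness of $Y$ (and the completion-by-an-element-of-$Y^*$ strengthening from Lemma \ref{sec:red-comp-0}) is essential, and it is the step that requires the most care; the rest is bookkeeping with the monomorphism $h$ and routine use of Lemma \ref{lemmaConstant}.
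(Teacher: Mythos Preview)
Your $(\Rightarrow)$ direction is essentially the paper's argument, just phrased in terms of constants rather than synchronizing pairs; the paper shows directly that a synchronizing pair $(h(y_1),h(y_2))$ of $h(Y)$ is also a synchronizing pair of both $Y$ and $h(A)$, using completeness of $Y$ to embed an $h(A^*)$-factorization into an $h(Y^*)$-factorization and then the code property of $h(A)$ to strip the padding. Your invocation of Lemma~\ref{sec:red-comp-0} is unnecessary here: ordinary completeness of $Y$ suffices.

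The $(\Leftarrow)$ direction, however, has a genuine gap in the form you wrote it. Your candidate $h(cz)$ has two problems. First, it need not lie in $h(Y)^*$: you only know $c\in Y^*$ and $z\in A^*$, so $cz$ need not be in $Y^*$, and a constant of $h(Y)$ must by definition belong to $h(Y)^*$. Second, even ignoring that, a \emph{single} occurrence of the $h(A)$-constant $h(z)$ only forces alignment on one side. Concretely, from $u_1 h(c)h(z) u_2\in h(A^*)$ and the constant property of $h(z)$ you get $u_1 h(cz)\in h(A^*)$ and $h(z)u_2\in h(A^*)$, but you cannot conclude $u_1\in h(A^*)$ or $u_2\in h(A^*)$, and without that you cannot pull the whole factorization back to $A^*$ and invoke the constant property of $c$ inside $Y^*$.

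The paper repairs both issues at once. Using completeness of $Y$, it picks $r,s\in A^*$ with $rus\in Y^*$, where $h(u)$ is the $h(A)$-constant, and sets $\zeta=h(rus)\in h(Y^*)$; then $\zeta$ is still a constant of $h(A^*)$. The candidate constant is the \emph{sandwich} $\zeta\,h(y)\,\zeta\in h(Y^*)$, with $y$ a constant of $Y$. Now, given $\alpha_1\,\zeta h(y)\zeta\,\alpha_2\in h(Y^*)$, the left copy of $\zeta$ yields $\alpha_1\zeta\in h(A^*)$ and the right copy yields $\zeta\alpha_2\in h(A^*)$; writing these as $h(\beta_1)$ and $h(\beta_2)$ gives $\alpha_1\zeta h(y)\zeta\alpha_2=h(\beta_1 y\beta_2)$, so by injectivity $\beta_1 y\beta_2\in Y^*$. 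One can then swap using the constant property of $y$ and push forward via $h$. Your parenthetical ``or some concatenation built from $h(c)$ and $c'$'' is on the right track, but the specific concatenation matters: you need the $h(A)$-constant on \emph{both} sides, and you need it upgraded (via completeness of $Y$) to an element of $h(Y^*)$.
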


\begin{proof}
  $(\Leftarrow)$ By hypothesis and Lemma~\ref{lemmaConstant}, there exists a word $y\in Y^*$ which is a constant of $Y^*$.
  Similarly, there exists a word $h(u)\in h(A^*)$, with $u\in A^*$, which is a constant of $h(A^*)$.
  Since $Y$ is complete, there exist words $r, s \in A^*$ such that $rus\in Y^*$.
  Let $\zeta = h (rus)\in h(Y^*)$.
  Obviously, $\zeta$ is a constant of $h(A^*)$.
  
  Let us prove that $\zeta h(y) \zeta \in h(Y^*)$ is a constant of $ h(Y^*)$.
  For this purpose, let $\alpha_1, \alpha_2, \alpha_3, \alpha_4\in B^*$ be such that $\alpha_1\zeta h(y) \zeta \alpha_2, \ \alpha_3\zeta h(y) \zeta \alpha_4 \in h(Y^*)$.
  Let us prove that $(\alpha_1, \alpha_4)$ and $(\alpha_3, \alpha_2)$ are $h(Y)$-completions of $\zeta h(y) \zeta$.
  By the latter condition and since $\zeta$ is a constant of $h(A^*)$, one has $\alpha _1 \zeta\in h(A^*)$ so that $\alpha _1 \zeta = h(\beta_1)$, for some $\beta _1\in A^*$.
  Similarly, one has $\zeta\alpha _2 = h(\beta_2)$, $\alpha _3\zeta = h(\beta_3)$, $\zeta\alpha _4 = h(\beta_4)$, for some $\beta _2, \beta _3, \beta _4 \in A^*$.
  The previous two conditions now imply $h(\beta_1 y \beta_2), h(\beta_3 y \beta_4)\in h(Y^*)$.
  Since $h$ is an injective map, the latter implies that $\beta_1 y \beta_2, \beta_3 y \beta_4\in Y^*$.
  Since $y$ is a constant of $Y^*$, one thus have $\beta_1 y \beta_4, \beta_3 y \beta_2\in Y^*$ so that $h(\beta_1 y \beta_4), h(\beta_3 y \beta_2)\in h(Y^*)$, so implying that $(\alpha_1, \alpha_4)$ and $(\alpha_3, \alpha_2)$ are $h(Y)$-completions of $\zeta h(y)\zeta$.
  
  $(\Rightarrow)$ Let $(h(y_1), h(y_2))$ be a synchronizing pair of $h(Y)$, with $y_1, y_2\in Y^*$.  One easily proves that $(y_1, y_2) $ is a synchronizing pair of $Y$.
  Indeed, if $ry_1y_2s\in Y^*$, with $r,s\in A^*$, one gets $h(ry_1y_2s)\in h(Y^*)$ which yields $h(ry_1), h(y_2s)\in h(Y^*)$.
  Since $h$ is an injective map, from the latter we get $ry_1, y_2s\in Y^*$.
  Thus $Y$ is a synchronizing set.
  
  Let us prove now that $h(A)$ is a synchronizing set of $B^*$ as well.
  More precisely, let us prove that the pair $(h(y_1), h(y_2)) $ above considered, is a synchronizing pair of $h(A)$.
  For this purpose, let $r, s\in B^*$ such that $r h(y_1)h(y_2)s\in h(A^*)$.
  Hence there exists $t\in A^*$ such that $r h(y_1y_2)s=h(t)$.
  On the other hand, since $Y$ is complete, there exist words $t_1, t_2\in A^*$ such that $t_1 t t_2 \in Y^*$, which implies $h(t_1 t t_2)=h(t_1) r h(y_1) h(y_2) s h(t_2) \in h(Y^*)$.
  Since $(h(y_1), h(y_2))$ is a synchronizing pair of $h(Y^*)$, one derives $h(t_1) r h(y_1), \ h(y_2) s h(t_2) \in h(Y^*)$.
  Thus, one has
  
  \begin{equation}
    \label{sec:red-sybch-eq:1}
    h(t_1),\ h(t_1) r h(y_1),\ r h(y_1)h(y_2)s,\ h(y_2) s h(t_2),\ h(t_2) \in h(A^*).
  \end{equation}
  Taking into account that $h(A)$ is a code and, consequently, there is a unique factorization of the word $h(t_1) r h(y_1)h(y_2) s h(t_2)$ as product of words of $h(A)$, one derives
  \[
    r h(y_1),\ h(y_2) s \in h(A^*)\,.
  \]
  Hence, $(h(y_1), h(y_2))$ is a synchronizing pair of the code $h(A)$.
  This completes the proof.
  \qed
\end{proof}

As an application of the two lemmas above, by encoding a $d$-letter alphabet on a suitable complete binary synchronizing code, one obtains the following result:

\begin{proposition}
  \label{sec:red-synch-2}
  Let $\cal L$ be the class of finite complete languages (resp., codes, prefix codes).
  Then
  \begin{equation}
    \label{sec:red-synch-eq:stat-lm}
    C_{\mathcal L}(n,d)\leq\left\lceil{ \frac{C_{\mathcal L}(\lceil\log_2(d+1)\rceil n,2)}{\lfloor\log_2(d-1)\rfloor} }\right\rceil.
  \end{equation}
\end{proposition}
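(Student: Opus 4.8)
The plan is to run the argument of Proposition~\ref{sec:red-comp-2} with Lemma~\ref{sec:red-synch-1} in place of Lemma~\ref{sec:red-comp-1}, encoding the $d$-letter alphabet on a binary prefix code that is not merely complete but also synchronizing. We may assume $d\ge 3$, the cases $d\le 2$ being trivial (and making the right-hand side undefined), and we set $m=\lfloor\log_2 d\rfloor$. The first and crucial step is to exhibit positive integers $k_1,\dots,k_d$ with
\[
  \sum_{i=1}^d 2^{-k_i}=1,\qquad \gcd(k_1,\dots,k_d)=1,\qquad \lfloor\log_2(d-1)\rfloor\le k_i\le\lceil\log_2(d+1)\rceil\quad(1\le i\le d).
\]
If $d$ is not a power of $2$, I would take $\gamma=2^{m+1}-d$ code-words of length $m$ and the remaining $d-\gamma$ of length $m+1$; a direct computation gives $\sum_i2^{-k_i}=1$, and since $0<\gamma<d$ both lengths occur, so $\gcd(k_1,\dots,k_d)=\gcd(m,m+1)=1$, while here $\lfloor\log_2(d-1)\rfloor=m$ and $\lceil\log_2(d+1)\rceil=m+1$. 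If $d=2^m$ is a power of $2$ (so $m\ge 2$), the perfectly balanced code is forced to have $\gcd$ equal to $m>1$, so I would instead take one code-word of length $m-1$, two of length $m+1$, and $2^m-3$ of length $m$ (that is, shorten one and lengthen two of the balanced code-words): one checks $\sum_i2^{-k_i}=1$, the lengths all lie in $\{m-1,m,m+1\}=[\lfloor\log_2(d-1)\rfloor,\lceil\log_2(d+1)\rceil]$, and $\gcd(m-1,m)=1$. In either case, by Proposition~\ref{sec:red-synch-kt} there is a complete synchronizing prefix code $Y\subseteq B^*$ whose code-word lengths are $k_1,\dots,k_d$.

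The rest is the transfer along the encoding, exactly as in Proposition~\ref{sec:red-comp-2}. Let $A$ be a $d$-letter alphabet, let $X\in\mathcal L$ be a complete synchronizing set over $A$ with $\ell(X)\le n$, and let $h:A^*\to B^*$ be a monomorphism with $h(A)=Y$; since $Y$ is a prefix code, $h$ is a prefix encoding. As $X$ and $h(A)$ are complete, Lemma~\ref{sec:red-comp-1} gives that $h(X)$ is complete, and as moreover $X$ and $h(A)$ are synchronizing, Lemma~\ref{sec:red-synch-1} gives that $h(X)$ is synchronizing; one checks routinely that $h(X)$ still belongs to $\mathcal L$ (a prefix encoding maps languages to languages, codes to codes, and prefix codes to prefix codes). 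Since $|h(a)|\le\lceil\log_2(d+1)\rceil$ for every $a\in A$, we get $\ell(h(X))\le\lceil\log_2(d+1)\rceil\,n$, so $h(X)$ admits a synchronizing pair $(p,q)$ with $|pq|\le C_{\mathcal L}(\lceil\log_2(d+1)\rceil\,n,2)$.

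Writing $p=h(p')$ and $q=h(q')$ with $p',q'\in X^*$, the implication $(\Rightarrow)$ of Lemma~\ref{sec:red-synch-1} shows that $(p',q')$ is a synchronizing pair of $X$, and since $|h(a)|\ge\lfloor\log_2(d-1)\rfloor$ for every $a\in A$, we have $|pq|\ge\lfloor\log_2(d-1)\rfloor\,|p'q'|$, hence
\[
  |p'q'|\le\left\lceil\frac{C_{\mathcal L}(\lceil\log_2(d+1)\rceil\,n,2)}{\lfloor\log_2(d-1)\rfloor}\right\rceil.
\]
As $X$ was an arbitrary complete synchronizing set of $\mathcal L$ on $d$ letters of size at most $n$, this gives (\ref{sec:red-synch-eq:stat-lm}).

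The main obstacle is the choice of code-word lengths in the first step: they must at once satisfy Kraft's equality, have $\gcd$ equal to $1$ (indispensable, by Proposition~\ref{sec:red-synch-kt} and the Remark following it, to obtain a \emph{synchronizing} binary code), and lie in the narrow window $[\lfloor\log_2(d-1)\rfloor,\lceil\log_2(d+1)\rceil]$. The power-of-two case is the delicate one: there the unique balanced assignment has $\gcd$ equal to $m>1$, and it must be perturbed — shortening one code-word and lengthening two — just enough to break the $\gcd$ without escaping the window, and this perturbation is what accounts for the $d\pm1$ appearing inside the logarithms. Everything after that is the same encoding/decoding argument as in Proposition~\ref{sec:red-comp-2}, routed through Lemma~\ref{sec:red-synch-1} rather than Lemma~\ref{sec:red-comp-1}.
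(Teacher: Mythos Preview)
Your proof is correct and follows essentially the same approach as the paper: the same case split on whether $d$ is a power of $2$, the same choices of code-word lengths in each case (via Proposition~\ref{sec:red-synch-kt}), and the same transfer along the prefix encoding using Lemmas~\ref{sec:red-comp-1} and~\ref{sec:red-synch-1}. The only cosmetic difference is that you phrase the length bounds uniformly as $\lfloor\log_2(d-1)\rfloor\le k_i\le\lceil\log_2(d+1)\rceil$ from the outset, whereas the paper first uses $\lfloor\log_2 d\rfloor\le|h(a)|\le\lceil\log_2 d\rceil$ in the non-power case (these coincide with yours there) and then treats the power-of-$2$ case separately.
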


\begin{proof}
  Let $A$ be a $d$-letter alphabet and let $X$ be a finite complete synchronizing language over $A$ of size $n$.
  
  First, we consider the case that $d$ is not a power of 2.
  Set $m = \lfloor\log_2 d\rfloor$, $\gamma = 2^{m+1}-d$ and let %
  $k_1, \ldots, k_d$ be the positive integers defined by (\ref{eq:lunghezze}).
  One easily checks that both the conditions of Proposition \ref{sec:red-synch-kt} are satisfied.
  Thus, $k_1,\ldots,k_d$ are the code-word lengths of a synchronizing complete prefix code $Y$ over a binary alphabet $B$.
  
  Now, let $h:A^*\rightarrow B^*$ be a monomorphism such that $h(A)=Y$.
  Then (\ref{sec:red-comp-eq:1}) is verified by every $a\in A$, so that the size of $h(X)$ is not greater than $n\lceil\log_2 d \rceil$.
  Since $X$ is a synchronizing and complete set and $Y$ is a synchronizing and complete code, by {Lemma} \ref{sec:red-comp-1} and {Lemma} \ref{sec:red-synch-1}, one has that $h(X)$ is a synchronizing and complete set as well.
  Moreover, if $X$ is a code (resp., a prefix code), then $h(X)$ is a code (resp., a prefix code), too.
  
  Let $(h(u), h(v))$ be a synchronizing pair of $h(X)$, $u,v\in X^*$.
  Hence we have
  \begin{equation}
    \label{sec:red-synch-eq:2}
    |h(uv)| \leq {C}_{\cal L}(n\lceil\log_2 d \rceil,2).
  \end{equation}
  It is easily checked that $(u, v)$ is a synchronizing pair of $X$.
  Indeed, let $ruvs\in X^*$, with $r,s\in A^*$.
  Hence $h(ruvs)\in h(X^*)$ so that $h(ru), h(vs)\in h(X^*)$.
  Since $h$ is an injective mapping, we conclude that $ru, vs\in X^*$.
  
  Hence, by taking account of (\ref{sec:red-comp-eq:1}), (\ref{sec:red-synch-eq:2}), one gets (\ref{sec:red-synch-eq:stat-lm}).

  Finally, let us treat the case where $ d = 2^{m}$.
  Let $k_1, \ldots, k_d$ be the sequence of positive integers defined as: for every $i=1, \ldots, d$,
  \[
    k_i =
    \begin{cases}
      m-1 &\mbox{if } i=1\,,\\ m+1 &\mbox{if } i=2, 3\, , \\ m &\mbox{if } i =4,\ldots, d\,.
    \end{cases}
  \]
  As before, one easily checks that the sequence of lengths $k_1, \ldots, k_d$ defined above satisfy both the conditions of Proposition \ref{sec:red-synch-kt}.
  Thus, $k_1,\ldots,k_d$ are the code-word lengths of a synchronizing complete prefix code $Y$ over a binary alphabet $B$.
  Moreover, for every $a\in A$, we have
  \begin{displaymath}
    \lfloor \log_2(d-1)\rfloor \leq|h(a)|\leq \lceil \log_2 (d+1) \rceil.
  \end{displaymath}
  From that point on, one proceeds by using the same argument of the previous case.
  The proof of the statement is now complete.
  \qed
\end{proof}

A similar bound can be found also in the case where completeness is not required:

\begin{proposition}
  \label{sec:red-synch-3}
  Let $\cal L$ be the class of finite languages (resp.
  codes, prefix codes).
  Then
  \begin{equation}
    \label{sec:red-synch-eq:stat-lm-unif}
    C_{\cal L} (n,d)\leq \left\lceil { \frac{C_{\cal L} (\lceil\log_2(d+1)\rceil n,2)}{\lceil\log_2(d+1)\rceil} } \right\rceil.
  \end{equation}
\end{proposition}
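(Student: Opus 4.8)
The plan is to imitate the proof of Proposition~\ref{sec:red-synch-2}, but to replace the non\-uniform prefix encoding used there by a \emph{uniform} encoding of word\-length exactly $\mu:=\lceil\log_2(d+1)\rceil$. The point of writing ``$d+1$'' rather than ``$d$'' inside the logarithm is exactly that it forces $d<2^{\mu}$: this guarantees that the encoding code $Y=h(A)$ can be taken to be a \emph{proper} subset of $B^{\mu}$, and it is precisely this incompleteness of $Y$ that makes $Y$ (and then $h(X)$) synchronizable, whereas the complete uniform code $B^{\mu}$ is not synchronizing.

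Concretely, I would fix a $d$-letter alphabet $A$ and choose a uniform code $Y\subseteq B^{\mu}$ with $|Y|=d$. Since $\mu=\lceil\log_2(d+1)\rceil$ means $2^{\mu-1}\le d\le 2^{\mu}-1$, the set $F=B^{\mu}\setminus Y$ of missing words satisfies $1\le|F|\le 2^{\mu-1}$; hence $F$ can be chosen to be an independent set for the perfect matching $\{\{s0,s1\}:s\in B^{\mu-1}\}$ on $B^{\mu}$ and to contain $1^{\mu}$. With this choice every word of $B^{<\mu}$ remains a proper prefix of some code\-word of $Y$, so the deterministic automaton $\mathcal A_Y$ recognizing $Y^{*}$ has state set $B^{<\mu}$, and a short computation (read a suitable power of $1$, then one $0$) shows that $\mathcal A_Y$ admits a word $z$ with $Q_Yz=\{1\}$. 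Let $h\colon A^{*}\to B^{*}$ be the monomorphism with $h(A)=Y$; it is a prefix encoding, $\ell(h(X))\le\mu\,\ell(X)\le\mu n$, and $h(X)$ is a code (resp.\ a prefix code) whenever $X$ is.

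The heart of the argument — and the step where Proposition~\ref{sec:red-synch-2} used completeness via Lemmas~\ref{sec:red-comp-1} and \ref{sec:red-synch-1}, so where a new idea is needed — is to show that $h(X)$ is synchronizing. I would proceed through automata. Take the automaton $\mathcal A_X$ for $X^{*}$ (a transitive synchronizing unambiguous automaton when $X$ is a code, by Lemma~\ref{lm-X}; the minimal trim automaton of $X^{*}$ when $X$ is just a language, a word being a constant of $X^{*}$ exactly when it acts as a reset word modulo the sink state) and subdivide each edge $p\stackrel{a}{\to}q$ into a path of length $\mu$ labelled $h(a)$; this yields an automaton $\mathcal A'$ for $h(X)^{*}=h(X^{*})$, transitive, and unambiguous if $\mathcal A_X$ is. Collapsing all original states to $1$ gives a label\-preserving graph morphism $\mathcal A'\to\mathcal A_Y$, whence $\delta_{\mathcal A'}(Q',z)$ is contained in the set of original states. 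On that set $\mathcal A'$ acts exactly as $\mathcal A_X$ does under images $h(w)$ of $A$-words, so applying after $z$ the word $h(w_1)$ (resp.\ the word $h(c)$) coming from a synchronizing word (resp.\ a constant $c\in X^{*}$) of $X$ collapses the whole state set of $\mathcal A'$ as needed; by Lemma~\ref{lm-X}(4) in the code case, and directly via Lemma~\ref{lemmaConstant} in the language case, $h(X)$ is synchronizing.

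Granting this, the proposition follows by the same bookkeeping as in Proposition~\ref{sec:red-synch-2}. Since $h(X)\in\cL$ is synchronizing with $\ell(h(X))\le\mu n$, it has a synchronizing pair $(p,q)$ with $|pq|\le C_{\cL}(\mu n,2)$; as $p,q\in h(X)^{*}=h(X^{*})$ we may write $p=h(u)$, $q=h(v)$ with $u,v\in X^{*}$, so $|uv|=|pq|/\mu\le\lceil C_{\cL}(\mu n,2)/\mu\rceil$. Finally $(u,v)$ is a synchronizing pair of $X$: for every $X$-completion $(r,s)$ of $uv$ one has $h(r)h(uv)h(s)=h(ruvs)\in h(X^{*})=h(X)^{*}$, so $(h(r),h(s))$ is an $h(X)$-completion of $pq$, whence $h(ru)=h(r)p\in h(X)^{*}$ and $h(vs)=qh(s)\in h(X)^{*}$, and injectivity of $h$ gives $ru,vs\in X^{*}$. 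Thus $C_{\cL}(n,d)\le\lceil C_{\cL}(\mu n,2)/\mu\rceil$, as claimed. I expect the only genuinely delicate point to be the third paragraph — the synchronizability of $h(X)$: confirming that $\mathcal A_Y$ really does have a reset word for the code $Y$ chosen above, and carrying the argument through in the pure\-language case, where the unambiguous\-automaton description of $X^{*}$ is unavailable and one must reason with constants (equivalently, with the minimal automaton of $X^{*}$ up to its sink state) directly.
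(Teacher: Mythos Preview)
Your overall architecture (uniform encoding of length $\mu=\lceil\log_2(d+1)\rceil$, show $h(X)$ synchronizing, pull back a shortest synchronizing pair) matches the paper, and your final bookkeeping paragraph is exactly right. The divergence is in the middle step---how to prove that $h(X)$ is synchronizing---and here the paper takes a much shorter, automaton-free route that also handles the general-language case without extra work.

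The paper does not try to show that the encoding code is synchronizing in any automata-theoretic sense. Instead it fixes two distinguished code-words $ba^{m-1}$ and $a^{m-1}b$ inside $B^m$ and observes that $(ba^{m-1},a^{m-1}b)$ is a synchronizing pair of the uniform code $B^m\setminus\{a^m\}$: any block-of-length-$m$ factorization of $r\,ba^{2m-2}b\,s$ with $|r|\not\equiv 0\pmod m$ would force a block $a^m$. Now, given a synchronizing pair $(y_1,y_2)$ of the original set, one may assume (after replacing it if necessary, the one-letter case being trivial) that $y_1y_2$ contains a factor $ab$ with $a\neq b$. Encode \emph{that particular} $a$ as $ba^{m-1}$ and \emph{that particular} $b$ as $a^{m-1}b$, the remaining letters arbitrarily into $B^m\setminus\{a^m\}$. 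Then for any $h(Y)$-completion $(r,s)$ of $h(y_1)h(y_2)$, the synchronizing pair of the uniform code forces $|r|\equiv 0\pmod m$; since $r\,h(y_1y_2)\,s\in h(Y^*)$ and $h$ is uniform, this yields $r=h(r')$, $s=h(s')$. Injectivity of $h$ gives $r'y_1y_2s'\in Y^*$, and the synchronizing property of $(y_1,y_2)$ finishes. So $(h(y_1),h(y_2))$ is itself a synchronizing pair of $h(Y)$---no automata, no case split between codes and languages.

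Your automata route, by contrast, has a genuine gap exactly where you flagged it. The claimed reset word ``a suitable power of $1$, then one $0$'' for $\mathcal A_Y$ does not follow from the conditions you impose on $Y$. For instance, with $\mu=3$ and $F=\{111,011,101,000\}$ (independent for your matching and containing $1^\mu$) one computes $Q_Y\cdot 1^{\mu-1}0=\{\epsilon,0,10\}$; and in general, reading only $1$'s eventually sends $Q_Y$ to $\emptyset$ rather than to a singleton unless one also requires something like $01^{\mu-1}\in Y$. Beyond this, for non-prefix codes you would still owe a backward collapse $Q'(w_2')^{-1}\subseteq Q_Xw_2^{-1}$, i.e.\ a separate ``co-reset'' for $\mathcal A_Y$, and for bare languages the constant/minimal-automaton identification you invoke is not available in the paper's framework. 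All of this is avoidable: the paper's two-line combinatorial argument replaces your entire third paragraph.
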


\begin{proof}
  Let $X\subseteq B^m$, with $m\geq 1$ such that $a^m \notin X$ and $a^{m-1}b, ba^{m-1} \in X$.
  It is easily checked that $X$ is a prefix synchronizing code endowed with the synchronizing pair $(ba^{m-1}, a^{m-1}b)$.
  Let $A$ be a $d$-letter alphabet and let $Y$ be a synchronizing set over $A$ such that $\ell(Y)\leq n$.
  We will find a synchronizing pair of $Y$.
  
  We may suppose that $Y\not\subseteq a^*$ since otherwise it has a synchronizing pair $(u,v)$ with $|uv|\leq C_{\cal L}(n,1)\leq C_{\cal L}(n,2)$.
  Let $(y_1, y_2)$ be a synchronizing pair of $Y$.
  With no loss of generality, we may assume that $ab\in \fact (y_1y_2)$, for two suitable distinct letters $a,b$.
  Let $m=\lceil\log_2(d+1)\rceil $ and let us consider the monomorphism $h: A^* \rightarrow B^*$ generated by a bijective mapping between $A$ and a subset of the set $X$ defined above such that
  \[
    h(a)= ba^{m-1}, \quad h(b) = a^{m-1}b.
  \]
  Let us prove that $(h(y_1), h(y_2))$ is a synchronizing pair of $h(Y)$ so that $h(Y)$ is a synchronizing set.
  For this purpose, let $rh(y_1)h(y_2)s \in h(Y^*)$ with $r, s \in B^*$.
  By costruction, we know that $y_1 y_2 = \alpha ab \beta,$ where $\alpha, \beta \in A^*$.
  The latter implies that
  \[
    rh(y_1)h(y_2)s=r h(\alpha ) ba^{m-1} a^{m-1}b h(\beta)s \in X^*\,.
  \]
  Since $(ba^{m-1}, a^{m-1}b)$ is a synchronizing pair of $X$ and $X$ is a uniform length code, from the latter equation one has $r,s \in X^*$ and thus $r = h(r')$ and $s = h(s')$ with $r', s' \in A^*$.
  Hence $r h(y_1) h(y_2)s = h(r'y_1y_2s') \in h(Y^*)$.
  By the injectivity of $h$, one has $r'y_1y_2s'\in Y^*$.
  Since $(y_1, y_2)$ is a synchronizing pair of $Y$, one derives $r' y_1, y_2s' \in Y^*$ and thus $rh(y_1), h(y_2)s\in h(Y^*)$.
  
  Now, using an argument similar to that used in the proof of {Proposition} \ref{sec:red-synch-2} and by remarking that, for every $w\in A^*$, $ |h(w)|= |w| \lceil\log_2(d+1)\rceil$, one proves (\ref{sec:red-synch-eq:stat-lm-unif}).
  \qed
\end{proof}

\section{Conclusions}
\label{cinque}
In this paper we have studied the minimal lengths of incompletable and synchronizing words of a finite set $X$ in terms of the size of $X$.
In particular, we have shown some relations among the parameters $R_\cF(n,d)$ and $C_\cM(n,d)$ bounding, respectively, the minimal lengths of incompletable words in sets of size $n$ and the minimal lengths of synchronizing pairs in maximal codes of size $ n$.

As we have seen, Restivo conjectured a quadratic bound to the minimal length of incompletable words of any finite incompletable set.
However, up to now, such a bound has been found only for prefix codes.
Thus, we may consider the following unanswered questions, most of which may be viewed as weaker versions of Restivo's Conjecture.
We recall that with $\cF$ we have denoted the class of all finite sets.

\begin{enumerate}
  \item Does $R_\cF(n)<\infty$ for all $n$ holds true?
  \item Find a polynomial upper bound to $R_\cF(n)$.
  \item Find a polynomial upper bound to $R_\cF(n,2)$.
  \item Let $\cF_k$ be the class of all $k$-word languages ($k\geq 2$).
  Evaluate $R_{\cF_k}(n)$.
  \item Does $R_{\cF_k}(n)=R_{\cF_k}(n,2)$ holds true?
  \item Let $\mathcal C$ be the class of finite codes.
  Find a polynomial upper bound to $R_\mathcal{C}(n)$.
  \item Let $\mathcal P$ be the class of finite prefix codes.
  Find the exact value of $R_\mathcal{P}(n)$ for all $n$.
\end{enumerate}


\begin{thebibliography}{99}
  \bibitem{AGW} R. L. Adler, L. W. Goodwyn, B. Weiss,  Equivalence of topological Markov shifts, {\em   Israel J. Math.,} 27,  49--63, 1977.
  
  \bibitem{A} D.  S.  Ananichev, V.  V.  Gusev, M.  V.  Volkov, Slowly synchronizing automata and digraphs, in: P.  Hlin\v en\'y, A.  Ku\v cera eds., {\em MFCS 2010 Mathematical Foundations of Computer Science, Lecture Notes in Comput.  Sci.} {Vol.  6281}, pp.  55--65, Springer, Berlin, 2010.
  
  %
  %
  \bibitem{BP} M.-P.  B\'eal, D.  Perrin, A quadratic upper bound on the size of a synchronizing word in one-cluster automata, in: V.  Diekert, D.  Nowotka eds., {\em DLT 2009 Developments in Language Theory, Lecture Notes in Computer Science}, {Vol.  5583}, pp.  81--90, Springer, Berlin, 2009.
  
  \bibitem{BBP} M.-P.  B\'eal, M.  V.  Berlinkov, D.  Perrin, A quadratic upper bound on the size of a synchronizing word in one-cluster automata, {\em Int.  J.
  Found.  Comput.  Sci.,} 22, 277--288, 2011.
  
  %
  %
  %
  %
  \bibitem{BP-book} J.  Berstel, D.  Perrin, Ch.  Reutenauer, Codes and Automata, Encyclopedia of Mathematics and its Applications, 129, Cambridge University Press, 2009.
  
  \bibitem{boe-deluca-restivo}J.  M.  Bo\"e, A.  de Luca, A.  Restivo, Minimal complete sets of words, {\em Theoret.  Comput.  Sci.,} 12, 325--332, 1980.
  
  \bibitem{Carpi} A.  Carpi, On synchronizing unambiguous automata, {\em Theoret.  Comput.  Sci.,} 60, 285--296, 1988.
  
  \bibitem{CD1} A.  Carpi, F.  D'Alessandro, The Synchronization Problem for Strongly Transitive Automata, in: M.  Ito, M.  Toyama eds., {\em DLT 2008 Developments in Language Theory, Lecture Notes in Computer Science}, {Vol.  5257}, pp.  240--251, Springer, Berlin, 2008.
  
  \bibitem{Acta} A.  Carpi, F.  D'Alessandro, Strongly transitive automata and the \v Cern\'y conjecture {\em Acta Informatica,} 46, 591--607, 2009.
    
  \bibitem{DLT} A.  Carpi, F.
  D'Alessandro, The synchronization problem for locally strongly transitive automata, in: R.  Kr\'alovi\v c, D.  Niwi\'nski eds., {\em MFCS 2009 Mathematical Foundations of Computer Science, Lecture Notes in Comput.  Sci.,} Vol.  {5734}, pp.  211--222, Springer, Berlin, 2009.
  
  \bibitem{DLT10} A.  Carpi, F.  D'Alessandro, On the Hybrid \v Cern\'y-Road coloring problem and Hamiltonian paths, in: Y.  Gao, H.  Lu, S.  Seki, S.  Yu eds., {\em DLT 2010 Developments in Language Theory, Lecture Notes in Comput.  Sci.,} Vol.  {6224}, pp.  124--135, Springer, Berlin, 2010.
  
    \bibitem{CD3} A.  Carpi, F.  D'Alessandro, Independent sets of words and the synchronization problem, {\em Advances in Applied Mathematics,} 50, 339--355, 2013.
  
  \bibitem{CDICTCS14} A.  Carpi, F.
 D'Alessandro, \v Cern\'y-like problems for finite sets of words, Bistarelli S., Formisano A. eds., {\em 15th Italian Conference on Theoretical Computer Science,} Vol.  1231, pp.  81--92, CEUR-WS.org 2014.
    \bibitem{CDMONS14} A.  Carpi, F.  D'Alessandro, Completeness and synchronization for finite sets of words, {\em 15th Mons Theoretical Computer Science Days,} Nancy (France) 2014.
    
    \bibitem{Cerny} J.  \v Cern\'y, Pozn\'amka k.  homog\'ennym experimenton s kone\v cn\'ymi automatmi, {\em Mat.  fyz.  cas SAV,} 14, 208--215, 1964.
  
  %
  %
  %
  %
  %
  %
  %
  %
  %
  %
  %
  %
  %
  %
  %
  %
  %
  %
  %
  %
  %
  \bibitem{cdel} A. de Luca, F. D'Alessandro, Teoria degli Automi Finiti, 68, {Springer Italia}, 2013.
  
  \bibitem{Fici} G. Fici, E. V. Pribavkina, J. Sakarovitch,
  On the Minimal Uncompletable Word Problem, CoRR, arXiv: 1002.1928, 2010.
  
  \bibitem{gusev} V. V. Gusev, E. V. Pribavkina,
  On Non-complete Sets and Restivo's Conjecture,
  in: G. Mauri, A. Leporati eds.,
  {\em DLT 2011 Developments in Language Theory, Lecture Notes in Comput.\ Sci.,}
  Vol. {6795}, pp. 239--250, Springer, Berlin, 2011.
  
  \bibitem{Pin78} J.  E. Pin,
  {\em Le probl\`eme de la synchronization et la conjecture de Cerny,}
  Th\`ese de 3\`eme cycle, Universit\'e de Paris 6, 1978.
  
  \bibitem{Pin78bis} J. E. Pin,
  {Sur un cas particulier de la conjecture de Cerny,}
  in: G. Ausiello, C. B{\"o}hm eds., 
  {\em 5th ICALP} {\em Lecture Notes in Computer Science}, 
  Vol. 62, pp. 345--352, Springer, Berlin, 1978.
  
  \bibitem{Restivo} A. Restivo, 
  Some remarks on complete subsets of a free monoid,
  in: A. de Luca ed.,
  {\em Non-Commutative Structures in Algebra and Geometric Combinatorics, International Colloquium, Arco Felice, July 1978},
  {\em Quaderni de ``La Ricerca Scientifica''}, CNR, 109, 19--25, 1981.
  
  \bibitem{trah} A. N. Trahtman,
  The road coloring problem,
  {\em Israel J. Math.,} 172, 51--60, 2009.
  
  %
  %
  \bibitem{V08} M. V.  Volkov, 
  Synchronizing automata and the Cerny conjecture, 
  in: C.  Mart\' in-Vide, F.  Otto, H.  Fernau eds., 
  {\em LATA 2008 Language and Automata Theory and Applications, Lecture Notes in Comput.  Sci.,} 
  Vol. {5196}, pp. 11--27, Springer, Berlin, 2008.
\end{thebibliography}
\end{document}